\documentclass[12pt]{article}
\usepackage{amsmath,amsfonts,amssymb}
\usepackage{alltt}

\usepackage{hyperref}

\usepackage{amsmath,amsfonts,ifthen,fullpage,enumerate,float}  

\usepackage{amsmath,amsfonts,amssymb}
\usepackage{graphicx}        
\oddsidemargin       -0.0in
\evensidemargin     -0.1in
\textwidth    6.2in         
\topmargin    +0.1in        
\textheight   9.0in   
\headheight   0.20in  
\headsep      0.0in   
\marginparwidth 0in
\marginparsep   0.1in
\parindent    20pt
\parskip = 0pt

\def\trace{\mathop{\rm trace}\nolimits}

\def\bmat#1{\begin{bmatrix}#1\end{bmatrix}}
\def\pmat#1{\begin{pmatrix}#1\end{pmatrix}}

\def\question#1{{\bf Question: }#1}
\def\question#1{}

\def\C{{\operatorname{C}}}

\def\SL{\mathop{\it SL}\nolimits}

\def\Z{\mathbb{Z}}
\def\CC{\mathbb{C}}

\def\ZZ{\mathbb{Z}}

\def\Cd{\CC^d}

\def\Zd{\Z_d}

\def\CC{\mathbb{C}}

\def\implies{\Longrightarrow}

\newcommand{\RR}{\mathbb{R}}

\newtheorem{theorem}{Theorem}[section]
\newtheorem{corollary}{Corollary}[section]
\newtheorem{lemma}{Lemma}[section]
\newtheorem{example}{Example}[section]

\newtheorem{proposition}{Proposition}[section]
\newtheorem{definition}{Definition}[section]

\newenvironment{proof}{{\noindent \it
Proof.}}{\hfill$\Box$\medskip}
%
%

\newif\ifdraft\def\draft{\drafttrue\hoffset=.8truecm\showlabeltrue
\def\comment##1{{\bf comment: ##1}}
\headline={\sevenrm \hfill \ifx\filenamed\undefined\jobname\else\filenamed\fi%
(.tex) (as of \ifx\updated\undefined???\else\updated\fi)
 \TeX'ed at {\hour\time\divide\hour by 60{}%
\minutes\hour\multiply\minutes by 60{}%
\advance\time by -\minutes
\the\hour:\ifnum\time<10{}0\fi\the\time\  on \today\hfill}}
}

\def\inpro#1{\langle#1\rangle}
\def\ip#1{\langle\kern-.28em\langle#1\rangle\kern-.28em\rangle_\nu}

\def\cU{{\cal U}}

\def\norm#1{\Vert#1\Vert}
\def\openR{{{\rm I}\kern-.16em {\rm R}}}

\let\ga\alpha

\let\gd\delta

\let\gth\theta

\let\gl\lambda

\let\gL\Lambda

\let\go\omega
\let\gO\Omega
\let\ga\alpha

\let\gd\delta

\def\inpro#1{\langle#1\rangle}

\def\Im{\mathop{\rm Im}\nolimits}

\def\Iff{\hskip1em\Longleftrightarrow\hskip1em}


\def\makeblanksquare#1#2{
\dimen0=#1pt\advance\dimen0 by -#2pt
      \vrule height#1pt width#2pt depth0pt\kern-#2pt
      \vrule height#1pt width#1pt depth-\dimen0 \kern-#1pt
      \vrule height#2pt width#1pt depth0pt \kern-#2pt
      \vrule height#1pt width#2pt depth0pt
}




\title{\bf Equations for the overlaps of a SIC}

\date{\today}

\author{ Len Bos$^*$ and Shayne Waldron$^\dagger$
 \\   \\
\vbox{
\hbox{\footnotesize ${}^*$ Department of Computer Science, University of Verona, Italy}
{\vskip 0.10 truecm}
\hbox{\footnotesize \noindent $^\dagger$ Department of Mathematics, University of Auckland,
Private Bag 92019, Auckland, New Zealand}
} }

\begin{document}

\maketitle 

\begin{abstract}

We give a holomorphic quartic polynomial
in the overlap variables whose zeros on the torus are precisely the 
Weyl-Heisenberg SICs (symmetric informationally complete positive
operator valued measures). 
By way of comparison, all the other known systems of equations
that determine a Weyl-Heisenberg SIC involve variables and their complex
conjugates.
We also give a related interesting result about the powers of the projective 
Fourier transform of the group $G=\Zd\times\Zd$.
\end{abstract}

\bigskip
\vfill

\noindent {\bf Key Words:}
finite tight frames,
SIC (symmetric informationally complete positive operator valued measure),
Heisenberg group,
Clifford group,
complex equiangular lines,

\bigskip
\noindent {\bf AMS (MOS) Subject Classifications:} 
primary
20F70, \ifdraft	Algebraic geometry over groups; equations over groups \else\fi
81P15, \ifdraft	Quantum measurement theory \else\fi
81Q10, \ifdraft	Selfadjoint operator theory in quantum theory, including spectral analysis \else\fi
81R05, \ifdraft	Finite-dimensional groups and algebras motivated by physics and their representations \else\fi
\quad

secondary
05B30, \ifdraft (Other designs, configurations) \else\fi
42C15, \ifdraft (Series of general orthogonal functions, generalised Fourier expansions, non-orthogonal expansions)\else\fi
51F25. \ifdraft	Orthogonal and unitary groups \else\fi

\vskip .5 truecm
\hrule
\newpage

`
\section{Introduction}

Throughout fix the integer $d\ge2$, and let
$\go$ be the primitive $d$-th root of unity $\go^:=e^{{2\pi\over d}i}$.
We think of vectors in $\Cd$ as periodic signals on the group
$\Zd$, and hence index vectors and matrices by elements of $\Zd$.
A set of $d^2$ unit vectors $(v_j)$ in $\Cd$ (or the lines that they
determine) is said to be {\bf equiangular} if 
\begin{equation}
\label{equicdn}
|\inpro{v_j,v_k}|^2 = {1\over d+1}, \qquad j\ne k.
\end{equation}
In quantum information theory, the 
corresponding rank one orthogonal projections $(v_jv_j^*)$
are said to be a {\bf symmetric informationally complete positive operator 
valued measure}, or a {\bf SIC} for short.
The existence of a
SIC for every dimension $d$ is known as {\it Zauner's conjecture}
(from his 1999 thesis, see \cite{Zau10}), 
or as the {\it SIC problem}.

There are high precision numerical constructions of SICs
\cite{RBSC04}, \cite{SG09}, \cite{S17},
and exact SICs in various dimensions \cite{ACFW17}, \cite{GS17}.
In all of these constructions, the SIC is a {\bf Weyl-Heisenberg SIC},
i.e., is the orbit $(\rho(g)v)_{g\in G}$ of
a {\it fiducial vector} $v$ under the unitary irreducible projective 
representation 
$\rho:G\to\cU(\CC^{\Zd})$ of 
$G=\Zd\times\Zd$ 
with Schur multiplier $\ga$ given by 
\begin{equation}
\label{rhoalphadef}
\rho_{jk} = \rho((j,k))=S^j\gO^k, \qquad
\ga((j_1,j_2),(k_1,k_2))=\go^{j_2k_1},
\end{equation}
where $S$ is the cyclic shift matrix $S_{jk}:=\gd_{j,k+1}$ 
and $\gO$ is the diagonal
(modulation) matrix $\gO_{jk}:=\go^j\gd_{jk}$.
In this case, the equiangularity condition (\ref{equicdn}) becomes
\begin{equation}
\label{vcdns}
|\inpro{S^j\Omega^k v,v}|^2 = {1\over{d+1}}, \qquad 
(j,k)\ne(0,0).
\end{equation}

In this paper, we consider equations in the variables
\begin{equation}
\label{cjkdef}
c_{jk} 
=\inpro{S^j\gO^kv,v}=\trace(vv^* S^j\gO^k), \qquad
(j,k)\in\Zd\times\Zd,
\end{equation}
which determine a (Weyl-Heisenberg) SIC. These variables (or scalar multiples
of them) are called the {\bf overlaps} of the SIC. They
depend only on the fiducial projector $P=vv^*$.
The original attempts to find numerical and exact SIC fiducials 
(using Groebner basis methods)
involved polynomial equations
in the variables $v_0,\ldots,v_{d-1}$ and 
$\overline{v_0},\ldots,\overline{v_{d-1}}$, such as 
the equiangularity condition (\ref{vcdns}),
the equations (see \cite{BW07}, \cite{K08}, \cite{ADF07})
\begin{align}
\label{SICsimpeqns}
\sum_{r\in\Zd} v_r\overline{v}_{r+s}\overline{v}_{r+t}v_{r+s+t}
=\begin{cases}
0, & s,t\ne 0;\cr
{1\over d+1},  & s\ne0,t=0,\quad s=0,t\ne0;\cr
{2\over d+1}, & (s,t)=(0,0),
\end{cases}
\end{align}
and the variational characterisation (used for finding numerical SICs)
\begin{equation}
\label{vSICeqns2}
 {1\over d^2}\sum_{(j,k)\in\Zd^2} |\inpro{S^j\gO^k v,v}|^4
= {2\over d(d+1)} \norm{v}^4, \qquad \norm{v}^2=1. 
\end{equation}

More recent exact constructions of SICs 
\cite{ACFW17}
have been in the overlap variables $c_{jk}$
(utilising a natural Galois action on them).
Clearly the $c_{jk}$ giving a SIC fiducial projector $vv^*$ via (\ref{cjkdef})
must satisfy
\begin{equation}
\label{cjkcdns1}
c_{00}=\norm{v}^2=1, \qquad 
|c_{jk}|^2=|\inpro{S^j\Omega^k v,v}|^2={1\over d+1},
\quad (j,k)\ne(0,0), 
\end{equation}
and also, by the rule $\gO^kS^j=\go^{jk}S^j\gO^k$,
\begin{equation}
\label{cjkcdns2}
c_{jk} 
= \overline{\inpro{v,S^j\gO^kv}}
= \overline{\inpro{\gO^{-k}S^{-j}v,v}}
= \overline{\inpro{\go^{jk}S^{-j}\gO^{-k}v,v}}
= \go^{-jk} \overline{c_{-j,-k}}.
\end{equation}
These conditions on the overlap variables $c_{jk}$ are not enough to guarantee 
that they come from a fiducial projector $vv^*$ (and hence prove Zauner's conjecture).

In Section 2, we define a linear operator $T$, which is 
an example of 
the projective Fourier transform, which allows us
to reconstruct the fiducial projector as $vv^*=Tc$ from a suitable $c=(c_{jk})$.
We prove that
in addition to
(\ref{cjkcdns1}) and (\ref{cjkcdns2}),
the simple condition
$$\trace ((Tc)^4)=1 $$
ensures that a $c$ gives a SIC fiducial (Theorem \ref{cvcharthm}).
We then give some examples, and describe the action of the Clifford group
on the SIC fiducials give by overlaps $c$.

In Section 3, we give some interesting properties of $T$, i.e.,
the projective Fourier transform of $G=\Zd\times\Zd$. In particular,
we show that $(\sqrt{d}T)^{6d}=(-1)^{{1\over 2}d(d-1)}I$,
and a variant has order $4d$. 
To our knowledge, this is only 
the second example of a Fourier transform of finite order,
after
the (discrete) Fourier
transform for a finite abelian group $G=\Zd$ (which satisfies $F^4=I$).

In Section 4, we give another system of equations in the overlaps $c$
that determine a SIC.
These involve the symbol ($z$--transform) of the rows of $c$. 
The symbols for $c$ giving a SIC turn out
to have interesting Riesz-type factorisation properties.
We use these to describe the (sporadic) SICs for $d=3$, 
which are parametrised by a hypocycloid.

\section{The reconstruction operator}

Since the $\rho$ of (\ref{rhoalphadef}) is a unitary irreducible projective 
representation of dimension $d$, it follows that $(\rho(g))_{g\in G}$
is a tight frame
(called a nice error frame with index group $G$ \cite{CW17})
 for the $d\times d$ matrices
with the Frobenius inner product
$$ \inpro{A,B} := \trace(AB^*) = \sum_{j,k} a_{jk}\overline{b_{jk}}, $$
i.e.,
\begin{equation}
\label{tightframerec}
A={d\over|G|}\sum_{g\in G} \inpro{A,\rho(g)}\rho(g), \qquad
\forall A\in\CC^{d\times d}. 
\end{equation}
In this particular case,  
$(\rho(g))_{g\in G}=(S^j\gO^k)$ is an orthogonal basis.
Taking $A=vv^*$ above gives the following formula for reconstruction from the overlaps
$c_{jk}=\inpro{S^j\gO^kv,v}$
$$ vv^* 
= {d\over d^2}\sum_{j,k}\inpro{vv^*,S^{-j}\gO^{-k}} S^{-j}\gO^{-k}
= {1\over d}\sum_{j,k} \go^{jk}c_{jk} S^{-j}\gO^{-k}
= {1\over d}\sum_{j,k} c_{jk} (S^{j}\gO^{k})^*, $$
since $\go^{jk}S^{-j}\gO^{-k}=(S^j\gO^k)^*$,
and $(S^{-j}\gO^{-k})^* 
=\go^{jk}S^j\gO^k$ gives
$$\inpro{vv^*,S^{-j}\gO^{-k}}
=\trace(vv^*(S^{-j}\gO^{-k})^*)
=\trace(v^*\go^{jk}S^j\gO^kv)
=\go^{jk}\inpro{S^j\gO^kv,v}
= \go^{jk} c_{jk}. $$
Motivated by this, we define a linear map 
$T:\CC^{\Zd\times\Zd}\to\CC^{\Zd\times\Zd}$ by
\begin{equation}
\label{Tcdefn}
Tc := {1\over d}\sum_{j,k} c_{jk} (S^{j}\gO^{k})^*
= {1\over d}\sum_{j,k} \go^{jk} c_{jk} S^{-j}\gO^{-k}.
\end{equation}
This can 
be viewed as the $\ga$-Fourier transform of \cite{W18}
(for a Schur multiplier $\ga$) which is a map 
$F_\ga:\CC^G\to\oplus_\rho \CC^{d_\rho\times d_\rho}$,
where $\rho$ counts over the irreducible projective representations
of $G$ with multiplier $\ga$ (and dimension $d_\rho$). 
Here $G=\Zd\times\Zd$ has just one
such representation, the $\rho$ of (\ref{rhoalphadef}),
and $F_\ga$ of
$\nu=c\in\CC^G=\CC^{\Zd\times\Zd}$ at the unitary representation $\rho$ is
$$ (F_\ga\nu)_\rho = \sum_{g\in G}\nu(g)\rho(g)^*
=\sum_{j,k} c_{jk}(S^j\gO^k)^*
= d(Tc). $$
Thus $T$ is the projective Fourier transform for the group $G=\Zd\times\Zd$.
For this particular group, we can view the image of a vector in $\CC^G$
as being in $\CC^G=\CC^{\Zd\times\Zd}$, and as a result it is natural
to consider powers of the Fourier transform. The only other case that
we know of where this can be done is for the ordinary representations of
a finite abelian group (where the representations give the character
group $\hat G$, which can be identified with $G$). In this case the
(discrete) Fourier transform has order $4$.

We now use $T$ to characterise those vectors (matrices) 
$c\in\CC^G=\CC^{\Zd\times\Zd}$ which give a fiducial
projector $vv^*=Tc$.

\begin{lemma}
\label{evaluelemma}
Let $T$ be given by {\rm (\ref{Tcdefn})}.
Suppose that $c=(c_{jk})\in\CC^{\Zd\times\Zd}$ satisfies
\begin{enumerate}[\rm (i)]
\item $c_{jk}=\omega^{-jk}\overline{c_{-j,-k}}$
\item $c_{00}=1$
\item $|c_{jk}|^2={1\over d+1}$, $(j,k)\ne(0,0)$.
\end{enumerate}
Then $Tc$ is Hermitian, and its eigenvalues $\gl_1,\ldots,\gl_d$ satisfy
$$ \sum_j\gl_j=1, \qquad
 \sum_j\gl_j^2=1, \qquad
 \sum_{j\ne k}\gl_j\gl_k=0, $$
i.e., its characteristic polynomial has the form
$$ p_{Tc}(\gl) = \gl^d-\gl^{d-1} + 0\gl^{d-2}+a_{d-3}\gl^{d-3}+\cdots
+a_1\gl + a_0. $$
\end{lemma}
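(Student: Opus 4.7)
The plan is to verify each claim in turn by exploiting the orthogonality of the operator basis $\{S^j\Omega^k:(j,k)\in\Zd\times\Zd\}$ already noted in the excerpt, together with the trace formula
$$ \trace(S^j\Omega^k) = d\,\gd_{j,0}\gd_{k,0}, $$
which follows from $(S^j\Omega^k)_{mn}=\go^{nk}\gd_{m,n+j}$.

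First I would show that $Tc$ is Hermitian. Starting from the second expression in (\ref{Tcdefn}),
$$ Tc = {1\over d}\sum_{j,k}\go^{jk}c_{jk}\,S^{-j}\gO^{-k}, $$
I compute $(Tc)^*$ and reindex $(j,k)\mapsto(-j,-k)$ to obtain
$$ (Tc)^* = {1\over d}\sum_{j,k}\overline{c_{-j,-k}}\,S^{-j}\gO^{-k}. $$
Since the $\{S^{-j}\gO^{-k}\}$ are a (scaled) orthogonal basis of $\CC^{d\times d}$, the equality $Tc=(Tc)^*$ is equivalent to matching coefficients, which is precisely hypothesis (i).

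Next, for the first moment, I apply the trace to (\ref{Tcdefn}); only the $(j,k)=(0,0)$ term survives, giving
$$ \sum_j\gl_j = \trace(Tc) = {1\over d}\cdot c_{00}\cdot d = 1 $$
by (ii). For the second moment, Hermiticity lets me write $\trace((Tc)^2) = \|Tc\|_F^2$, and orthogonality of the $S^j\gO^k$ with $\|S^j\gO^k\|_F^2 = d$ (since they are unitary) yields
$$ \sum_j\gl_j^2 = \|Tc\|_F^2 = {1\over d^2}\sum_{j,k}|c_{jk}|^2\cdot d = {1\over d}\Bigl(1 + (d^2-1)\cdot{1\over d+1}\Bigr) = {1\over d}\cdot d = 1, $$
using (ii) and (iii).

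Finally, the identity $\sum_{j\ne k}\gl_j\gl_k = (\sum_j\gl_j)^2 - \sum_j\gl_j^2 = 1-1 = 0$ gives the vanishing of the second elementary symmetric polynomial, so the coefficient of $\gl^{d-2}$ in $p_{Tc}(\gl) = \prod_j(\gl-\gl_j)$ is zero, completing the claim. There is no serious obstacle here; the whole argument rests on the orthogonality/trace properties of the Weyl-Heisenberg basis, and the only bookkeeping point is the reindexing used to derive Hermiticity from (i).
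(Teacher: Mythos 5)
Your proof is correct and follows essentially the same route as the paper: Hermiticity by comparing coefficients in the orthogonal Weyl--Heisenberg basis (equivalent to the paper's substitution of (i) plus reindexing), then the first and second moments via $\trace(S^j\Omega^k)=d\,\delta_{j,0}\delta_{k,0}$ and $\trace((Tc)^2)=\norm{Tc}^2$, giving $\sum_j\gl_j=\sum_j\gl_j^2=1$ and hence the vanishing $\gl^{d-2}$ coefficient. The only cosmetic difference is that the paper writes the second elementary symmetric function with the factor $1/2$, which does not affect the conclusion.
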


\begin{proof} Firstly, observe (i) implies that $Tc$ is Hermitian, since
$$ (Tc)^* = {1\over d}\sum_{j,k}\omega^{jk}c_{-j,-k} S^j\gO^k
= {1\over d}\sum_{j,k}\omega^{(-j)(-k)}c_{j,k} S^{-j}\gO^{-k}
= Tc. $$
Since $\trace(S^j\gO^k)=0$, $(j,k)\ne(0,0)$,
we calculate using (ii) that
$$  \sum_j\gl_j = \trace(Tc) = {1\over d}\sum_{j,k} \omega^{jk}c_{j,k} 
\trace(S^{-j}\gO^{-k})
= {1\over d}\omega^0 c_{00}\trace(I) =c_{00}=1. $$

The so called $2$--trace $\sum_{j\ne k}\gl_j\gl_k$ of $Tc$ is equal to
$\{(\trace(Tc))^2-\trace((Tc)^2)\}/2$.
Since $Tc$ is Hermitian, $\trace((Tc)^2)=\inpro{Tc,(Tc)^*}=\inpro{Tc,Tc}$,
and by the orthogonality of the $\rho_{jk}=S^j\gO^k$, we calculate
$$ \inpro{Tc,Tc} = {1\over d^2} \sum_{j,k} |c_{j,k}|^2 
\inpro{\rho_{jk}^*,\rho_{jk}^*}
= {1\over d}  \sum_{j,k} |c_{jk}|^2. $$
Now by (ii) and (iii),
$$ \trace((Tc)^2) = \inpro{Tc,Tc} = {1\over d}  \sum_{j,k} |c_{jk}|^2
= {1\over d}\Bigl( 1 + (d^2-1){1\over d+1}\Bigr) = 1, $$
and so
$\sum_{j\ne k}\gl_j\gl_k = \{(\trace(Tc))^2-\trace((Tc)^2)\}/2=(1-1)/2 =0$.
\end{proof}

\begin{theorem}
\label{cvcharthm}
Let $T$ be given by {\rm (\ref{Tcdefn})}.
Then a matrix $c=(c_{jk})\in\CC^{\Zd\times\Zd}$ determines
a fiducial projector for a Weyl-Heisenberg SIC by
$vv^*=Tc$ if and only if
\begin{enumerate}[\rm(i)]
\item $\displaystyle{ c_{jk}=\omega^{-jk}\overline{c_{-j,-k}} }$
\item $\displaystyle{ c_{00}=1 }$
\item $\displaystyle{|c_{jk}|^2={1\over d+1}}$, $(j,k)\ne(0,0)$
\item $\trace((Tc)^4)=1$
\end{enumerate}
Moreover this fiducial satisfies $\inpro{S^j\Omega^kv,v}=c_{jk},$ 
and if $v_0\neq0,$ then $v$ is given by
$$v={1\over d \overline{v_0}}\overline{c}\pmat{1\cr 1\cr \vdots\cr1}.$$
\end{theorem}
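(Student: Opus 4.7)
The plan is to establish each direction of the equivalence in turn, and then read the explicit formula for $v$ off the first column of the rank-one projector $Tc$. For the forward implication, suppose $vv^* = Tc$ with $v$ a unit SIC fiducial. Conditions (i)--(iii) were already recorded in the introduction as (\ref{cjkcdns2}) and (\ref{cjkcdns1}), and condition (iv) is immediate because $P := vv^*$ is an orthogonal projection, so $P^4 = P$ and $\trace((Tc)^4) = \trace(P) = \norm{v}^2 = 1$.

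For the reverse implication, Lemma \ref{evaluelemma} applied to (i)--(iii) already yields that $M := Tc$ is Hermitian with real eigenvalues $\gl_1,\ldots,\gl_d$ satisfying $\sum_j \gl_j = 1$ and $\sum_j \gl_j^2 = 1$. Assumption (iv) adds $\sum_j \gl_j^4 = 1$. The key observation is that $\sum_j \gl_j^2 = 1$ forces $|\gl_j| \leq 1$ for every $j$, so each factor $1-\gl_j^2$ is nonnegative, and hence
$$ 0 \;=\; \sum_j \gl_j^2 - \sum_j \gl_j^4 \;=\; \sum_j \gl_j^2(1-\gl_j^2) $$
is a sum of nonnegative terms, forcing each $\gl_j \in \{-1,0,1\}$. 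Writing $p$ and $n$ for the multiplicities of $+1$ and $-1$, the constraints $p-n = 1 = p+n$ give $p = 1$, $n = 0$. Hence $Tc$ is a rank-one orthogonal projection, and $Tc = vv^*$ for a unit vector $v$, unique up to an overall phase.

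To see that this $v$ is in fact a SIC fiducial with $\inpro{S^j\gO^k v, v} = c_{jk}$, set $\tilde c_{jk} := \inpro{S^j\gO^k v, v}$. The tight-frame reconstruction identity used to motivate (\ref{Tcdefn}) gives $vv^* = T\tilde c$, whence $T(c-\tilde c) = 0$. Since $T$ expands its argument in the orthogonal basis $\{(S^j\gO^k)^*\}$ of $\CC^{d\times d}$, it is injective, so $c = \tilde c$. Hypothesis (iii) then reads $|\inpro{S^j\gO^k v, v}|^2 = 1/(d+1)$ for $(j,k) \ne (0,0)$, which is the SIC equiangularity condition.

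Finally, for the explicit formula, the $(j,0)$ entry of $vv^*$ is $v_j \overline{v_0}$, while a direct calculation from the definition of $T$ yields $(Tc)_{j,0} = \frac{1}{d}\sum_k \omega^{-jk} c_{-j,k}$. Substituting $c_{-j,k} = \omega^{jk}\overline{c_{j,-k}}$ (which follows from (i)) and reindexing $k \mapsto -k$ collapses this to $(Tc)_{j,0} = \frac{1}{d}\sum_k \overline{c_{jk}}$, i.e.\ the $j$-th component of $\frac{1}{d}\,\overline{c}\,\onevect$. Dividing by $\overline{v_0}$ (nonzero by hypothesis) produces the stated formula. The one nontrivial step is the eigenvalue argument; once the identity $\sum_j \gl_j^2(1-\gl_j^2) = 0$ is spotted, the remaining steps are routine bookkeeping.
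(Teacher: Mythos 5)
Your proposal is correct and follows essentially the same route as the paper: the heart of both arguments is the eigenvalue bookkeeping from Lemma \ref{evaluelemma} plus $\trace((Tc)^4)=\trace((Tc)^2)$ forcing the spectrum $\{1,0,\dots,0\}$, hence $Tc=vv^*$. Your two small variations --- identifying $c$ with the overlaps via injectivity of $T$ rather than pairing against the orthogonal basis $\{S^j\gO^k\}$, and reading the formula for $v$ off the first column of $Tc$ using (i) instead of the $\sum_k\gO^k=d\,e_0e_0^*$ computation --- are equivalent reformulations of the paper's steps, not a different method.
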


\begin{proof}
By Lemma \ref{evaluelemma},
the eigenvalues of the Hermitian matrix $Tc$ satisfy
$$ \sum_j\gl_j=1, \qquad \sum_j\gl_j^2=1, \qquad \sum_{j\ne k}\gl_j\gl_k=0, $$
so that $0\le\gl_j^2\le 1$. Thus $\gl_j^4\le\gl_j^2$, with equality if
and only if $\gl_j^2\in\{0,1\}$. But
$$ \sum_j \gl_j^4 = \trace((Tc)^4) =1 = \trace((Tc)^2)= \sum_j \gl_j^2, $$
so that $\gl_j^2\in\{0,1\}$, $\forall j$, and we must have $\gl_j=1$ for
some $j$, and $\gl_j=0$ for all others, i.e.,
$Tc$ is rank one, say 
$$ Tc = {1\over d} \sum_{j,k} c_{jk} \rho_{jk}^* = vv^*, \qquad
v\in\CC^{\Zd}. $$
Since 
$\{\rho_{jk}\}$ is orthogonal, taking the inner product
of the above 
with $\rho_{jk}=S^j\gO^k$ gives 
$$ c_{jk}={1\over d} c_{jk} \inpro{\rho_{jk}^*,\rho_{jk}^*} 
= \inpro{vv^*,\rho_{jk}^*}
= \trace(vv^*S^j\gO^) = \trace(v^* S^j\gO^k v) 
= \inpro{S^j\gO^kv,v}. $$
Finally, with $e_j$ the standard basis vectors, we calculate
\begin{align*}
\hbox{$j$--th entry of }\ \overline{c}\pmat{1\cr 1\cr \vdots\cr1}
&= \sum_k \overline{c_{jk}} 
= \sum_k \inpro{v,S^j\gO^kv}
= \inpro{v,S^j\Bigl( \sum_k \gO^k\Bigr)v} \cr
&= \inpro{v,S^j
\pmat{d&0&\cdots&0\cr0&0&\cdots&0\cr\vdots&\vdots&&\vdots\cr0&0&\cdots&0} v}
= \inpro{v,S^j dv_0 e_0}
= d\overline{v_0} \inpro{v,e_j}
= d\overline{v_0} v_j. 
\end{align*}
\end{proof}

\noindent
From the proof, we see that {\rm (iv)} can
be replaced by various 
equivalent conditions, e.g.,
\smallskip

{\rm (iv)${}'$}
{\sl The characteristic polynomial of $Tc$ has the form $P_{Tc}(\gl)=\gl^d-\gl^{d-1}$ }
\smallskip

{\rm (iv)$''$}
{\sl $\trace((Tc)^j)=1$, $j=1,2,\ldots$ }
\smallskip

\noindent
since given {\rm (i), (ii), (ii)},  
\smallskip

{\rm (iv)$''$ $\implies$ (iv) $\implies$
$Tc$ has eigenvalues $1,0,\ldots,0$ $\iff$ (iv)$'$ $\implies$ (iv)$''$.}
\smallskip

\noindent
By condition (ii), we may set $c_{00}=1$, to obtain the following
characterisation.

\begin{corollary}
The overlaps of a Weyl-Heisenberg SIC are precisely the
zeros of the polynomial $\trace((Tc)^4)=1$ on the torus 
$$|c_{jk}|={1\over\sqrt{d+1}}, \qquad
(j,k)\ne(0,0). $$
\end{corollary}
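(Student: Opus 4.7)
The plan is to derive the corollary essentially as a restatement of Theorem \ref{cvcharthm}. That theorem gives a four-condition characterisation of SIC overlaps: (i) the Hermitian symmetry $c_{jk} = \omega^{-jk}\overline{c_{-j,-k}}$, (ii) the normalisation $c_{00} = 1$, (iii) the equimodulus conditions $|c_{jk}|^2 = 1/(d+1)$, and (iv) the quartic equation $\trace((Tc)^4) = 1$. Under the normalisation in (ii), condition (iii) cuts out precisely the stated torus $|c_{jk}| = 1/\sqrt{d+1}$ for $(j,k) \neq (0,0)$, and (iv) is the single polynomial equation left to impose on it. So the proof amounts to matching up the wording of the corollary with the four items in the theorem.

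The subtle point is how condition (i) enters. I would take the torus in the corollary to tacitly include the symmetry $c_{jk} = \omega^{-jk}\overline{c_{-j,-k}}$, which is the natural ``reality'' condition cutting the complex torus $|c_{jk}| = 1/\sqrt{d+1}$ down to a real subtorus of about half the dimension. Without this condition, $Tc$ need not be Hermitian and the eigenvalue analysis of Lemma \ref{evaluelemma} breaks down, so spurious non-SIC solutions of $\trace((Tc)^4) = 1$ could in principle appear on the complex torus alone. It is worth flagging this in the proof explicitly so the reader understands the torus is the real one.

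Finally, I would verify the ``holomorphic quartic polynomial'' claim: from \eqref{Tcdefn}, $T$ is $\CC$-linear in $c$ with no conjugation appearing, so each entry of $Tc$ is linear in the $c_{jk}$, the entries of $(Tc)^4$ are homogeneous of degree four in $c$, and their trace is a polynomial of degree four in the $c_{jk}$ with no $\overline{c_{jk}}$ appearing. With this in hand, the forward direction of the corollary follows from \eqref{cjkcdns1} together with the definition of the overlaps, while the reverse direction is just Theorem \ref{cvcharthm} applied on the (real) torus. The only real obstacle is expository: fixing the correct meaning of ``torus'' so that the single holomorphic equation $\trace((Tc)^4) = 1$ really does cut out exactly the SIC overlaps.
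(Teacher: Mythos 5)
Your overall route is the one the paper takes: the corollary is presented with no separate argument, as Theorem \ref{cvcharthm} read off with $c_{00}=1$. The difficulty is your handling of condition (i), and here your proposal does not prove the statement as printed. The corollary (and the abstract) put \emph{only} the modulus conditions into the torus $|c_{jk}|=1/\sqrt{d+1}$, $(j,k)\ne(0,0)$; that the single holomorphic equation $\trace((Tc)^4)=1$ then cuts out exactly the SIC overlaps, with no conjugate-symmetry constraint imposed, is precisely the advertised point. By declaring that the torus ``tacitly includes'' $c_{jk}=\omega^{-jk}\overline{c_{-j,-k}}$ you replace the claim by a weaker one, and you explicitly leave open that spurious zeros might exist on the torus actually named in the corollary. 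So, judged against the statement, the reverse inclusion is missing. (Your instinct that something must be said is reasonable, since the paper itself offers no justification for dropping (i); but the right response is to show (i) is redundant, not to smuggle it into the domain.)

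The redundancy of (i) can be established by a short supplement to the theorem's proof. Since $\sqrt{d}\,T$ is unitary (equivalently, by the orthogonality computation in Lemma \ref{evaluelemma}, which does not use Hermiticity at that step), conditions (ii) and (iii) alone give $\trace(Tc)=c_{00}=1$ and $\trace\bigl((Tc)(Tc)^*\bigr)=\tfrac1d\sum_{j,k}|c_{jk}|^2=1$. Now let $\lambda_1,\dots,\lambda_d$ be the (a priori complex) eigenvalues of $Tc$ and suppose $\trace((Tc)^4)=1$. Then
\begin{equation*}
1=\Bigl|\sum_j\lambda_j^4\Bigr|\le\sum_j|\lambda_j|^4\le\Bigl(\sum_j|\lambda_j|^2\Bigr)^2\le\bigl(\trace((Tc)(Tc)^*)\bigr)^2=1,
\end{equation*}
the last inequality being Schur's inequality. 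Equality in Schur's inequality forces $Tc$ to be normal, equality in the middle forces at most one nonzero eigenvalue, and $\trace(Tc)=1$ forces that eigenvalue to be $1$; hence $Tc=vv^*$ with $\norm{v}=1$. The final computation in the proof of Theorem \ref{cvcharthm} (which uses only orthogonality of the $S^j\gO^k$) then gives $c_{jk}=\inpro{S^j\gO^kv,v}$, so (iii) says $v$ is a SIC fiducial, and condition (i) holds automatically because the $c_{jk}$ are genuine overlaps, by (\ref{cjkcdns2}). With this observation, together with your forward direction (which is fine: for genuine overlaps $Tc=vv^*$ by (\ref{tightframerec}), so $\trace((Tc)^4)=1$), the corollary holds exactly as stated, on the full torus.
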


The condition (i) allows further variables $c_{jk}$ to be 
eliminated. When $d$ is odd, half of the  $(d^2-1)$ 
variables $c_{jk}$, $(j,k)\ne(0,0)$, can be eliminated. For $d$ even,
half of the $d^2-4$ variables $c_{jk}$, $(j,k)\not\in\{0,{d\over2}\}^2$,
can be eliminated, and
\begin{equation}
\label{deveneqs}
c_{{d\over2},0} 
= \overline{c_{{d\over2},0}},
\qquad 
c_{0,{d\over2}} 
= \overline{c_{0,{d\over2}}}, \qquad
c_{{d\over2},{d\over2}} 
= (-1)^{d\over2}\overline{c_{{d\over2},{d\over2}}}, 
\end{equation}
so that $c_{0,{d\over2}},c_{0,{d\over2}}\in\RR$, and
$c_{{d\over2},{d\over2}}$ is in $\RR$ for ${d\over2}$ even, 
and is in $i\RR$ for ${d\over2}$ odd. 

\begin{example}
For $d=2$, the conditions {\rm (\ref{deveneqs})} of {\rm (i)} give
$c_{01}\in\RR$, $c_{10}\in\RR$, $c_{11}\in i\RR$.
Hence imposing the conditions {\rm (ii)} and {\rm (iii)},
we have eight possibilities
\begin{equation}
\label{eightposs}
c_{00}=1, \qquad c_{01}=\pm {1\over\sqrt{3}},
\qquad c_{10}=\pm {1\over\sqrt{3}},
\qquad c_{11}=\pm i {1\over\sqrt{3}}.
\end{equation}
Taking the `$+$' choice above gives
$$ Tc = T\pmat{1&{1\over\sqrt{3}}\cr{1\over\sqrt{3}}&{1\over\sqrt{3}}i}
= {1\over 2}\bigl( I + {1\over\sqrt{3}}(S+\Omega)+{i\over\sqrt{3}}
S\Omega\bigr) 
= {1\over2\sqrt{3}} \pmat{ \sqrt{3}+1 & 1-i
\cr 1+i & \sqrt{3}-1 \cr}, $$
which satisfies $\trace((Tc)^4)=\trace(Tc)=1$, and so gives a
Weyl-Heisenberg SIC
$$ v={1\over\sqrt{2}\sqrt{3+\sqrt{3}}} \pmat{\sqrt{3}+1\cr 1+i}. $$
In fact all eight choices give SICs which are equivalent, as we
now explain.
\end{example}

The group generated by $S$ and $\gO$ is called the {\bf Heisenberg group},
and its normaliser in the unitary matrices is the {\bf Clifford group}.
Indeed, if a $a\in\C(d)$, then
$$ a\rho_{jk}a^{-1} = z_a(j,k)\rho_{\psi_a(j,k)}, \qquad
\forall (j,k)\in\Zd^2, $$
where $\psi_a$ is matrix multiplication by an element of $\SL_2(\Zd)$.
The Clifford group $\C(d)$ maps SIC fiducials to SIC fiducials, via the
action
$$ a\cdot (vv^*) := (av)(av)^*= a (vv^*) a^{-1}, \qquad a\in\C(d).$$
The induced action on the overlaps of the fiducial is given by
\begin{align*}
(a\cdot c)_{jk} 
&= \trace ( a(vv^*)a^{-1}S^j\gO^k)
= \inpro{a^{-1}S^j\gO^k a v,v}
= \inpro{z_{a^{-1}}(j,k)\rho_{\psi_{a^{-1}}(j,k)}v, v} \cr
&=z_{a^{-1}}(j,k)c_{\psi_{a^{-1}}(j,k)}.
\end{align*}
In \cite{BW18}, it is shown that the Clifford group is generated 
by the scalar matrices, $S$, $\gO$,
the Fourier transform $F$ and the Zauner matrix $Z$, where
$$ F_{jk}:={1\over\sqrt{d}}\go^{jk}, \qquad
Z_{jk}:= \zeta^{d-1}\mu^{j(j+d)}, \quad \mu:=e^{{2\pi\over2d}i}, \
\zeta:=e^{{2\pi\over 24}i}. $$
For these (see \cite{W17})
$$ (S^a\gO^b \cdot c)_{jk} = \go^{ak-bj} c_{jk}, \qquad
(F\cdot c)_{jk}= \go^{-jk} c_{k,-j}, \qquad
(Z\cdot c)_{jk} = \mu^{j(j+d-2k)}c_{k-j,-j}.  $$
When a (Weyl-Heisenberg) SIC fiducial $vv^*$ is known, 
there is always appears to be one which is given by an
eigenvector $v$ of $Z$ (indeed these are often searched for 
directly). Correspondingly, the overlaps satisfy
$Z\cdot c=c$, i.e., the equations
$$ \mu^{j(j+d-2k)}c_{k-j,-j} = c_{jk},  $$
which allows a further reduction of the variables $c_{jk}$.


\section{Properties of the projective Fourier transform}

Here we consider some properties of $T:\CC^{\Zd\times\Zd}\to\CC^{\Zd\times\Zd}$
given by {\rm (\ref{Tcdefn})},
i.e.,
the projective Fourier transform of $G=\Zd\times\Zd$. 
It follows from the Plancherel formula for projective representations
\cite{W18}, or (\ref{tightframerec})
that $\sqrt{d}T$ is unitary.
Indeed, (\ref{tightframerec})
can be written as $I=T\gL$, where 
$\gL:\CC^G\to\CC^G:A\mapsto(\inpro{A,\rho(g)})_{g\in G}$ satisfies
$$ \inpro{A,A}
= {1\over d}\sum_{g\in G}|\inpro{A,\rho(g)}|^2={1\over d}\inpro{\gL A,\gL A}, $$
so that ${1\over\sqrt{d}}\gL$ is unitary, and hence $\sqrt{d}T$ is unitary.

We now show that $\sqrt{d}T$ has finite order ($6d$ or $12d$), i.e.,
the projective Fourier transform for $\rho$ of (\ref{rhoalphadef})
has finite order (Theorem \ref{ShayneTfiniteorder}). 
To do this, we need a technical lemma (Lemma \ref{RsquareFlemma}), 
based on the Zauner matrix $Z$ (of order $3$), which can be factored
$$ Z=\zeta^{d-1}RF, \quad \zeta:=e^{2\pi i\over 24},  \qquad
(R)_{jk}=\mu^{j(j+d)}\gd_{jk}, \quad \mu:=e^{2\pi i\over 2d}, $$
where $F$ is the Fourier matrix, and $R$ is diagonal.
The {\it strong} form of Zauner's conjecture is that there is
a SIC fiducial which is an eigenvector of $Z$,  
for every dimension $d$. 

\begin{lemma}
\label{RsquareFlemma}
For any $d$, we have that
$$ (R^2F)^2 
= \zeta^{-6(d-1)} (RF) R^{-2} (RF)^{-1}, \qquad
\zeta=e^{2\pi i\over 24}, $$
and, in particular
$$ (R^2F)^{2d} = (-1)^{{1\over2}d(d-1)} I. $$
\end{lemma}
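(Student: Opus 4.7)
The plan is to reduce the first identity to two ingredients: (a) the hypothesis $Z^3 = I$, equivalently $(RF)^3 = \zeta^{-3(d-1)}I$, or cyclically $FRF = \zeta^{-3(d-1)}R^{-1}F^{-1}R^{-1}$; and (b) the commutation of $R$ with the Fourier reversal $J := F^2$. The second identity will then follow by raising the first to the $d$-th power and using $R^{2d} = I$.

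For (b), since $J_{jk} = \delta_{j,-k}$ and $R$ is diagonal, $(JRJ)_{jj} = R_{-j,-j} = \mu^{j^2-jd}$; the factor $\mu^{-2jd}$ equals $1$ (since $\mu^{2d}=1$), so $R_{-j,-j} = R_{jj}$ and hence $F^2 R = R F^2$. To prove the first identity, I would right-multiply the desired equation $R^2FR^2F = \zeta^{-6(d-1)}RFR^{-2}F^{-1}R^{-1}$ by $RF$, simplify the right-hand side to $\zeta^{-6(d-1)}RFR^{-2}$, and substitute $FRF = \zeta^{-3(d-1)}R^{-1}F^{-1}R^{-1}$ in the left-hand side; after cancelling common factors of $R$ on each side, the identity collapses to
\begin{equation*}
F R F^{-1} \;=\; \zeta^{-3(d-1)}\, R^{-1} F R^{-1}.
\end{equation*}
This last identity is proved by writing $FRF^{-1} = (FRF)F^{-2}$, applying (a) to obtain $\zeta^{-3(d-1)}R^{-1}F^{-1}R^{-1}F^{-2}$, commuting $R^{-1}$ past $F^{-2}$ by (b), and using $F^{-3}=F$ (from $F^4=I$) to arrive at the desired form.

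For the \emph{in particular} clause, iterate the first identity to obtain
\begin{equation*}
(R^2F)^{2d} \;=\; \bigl[\zeta^{-6(d-1)}(RF)R^{-2}(RF)^{-1}\bigr]^d \;=\; \zeta^{-6d(d-1)}(RF)R^{-2d}(RF)^{-1}.
\end{equation*}
The entrywise check $R^{2d}_{jj} = \mu^{2dj(j+d)} = 1$ shows $R^{2d}=I$, collapsing the conjugate to $I$; and $\zeta^{-6} = e^{-\pi i/2} = -i$ gives $\zeta^{-6d(d-1)} = (-i)^{d(d-1)} = (-1)^{d(d-1)/2}$, valid because $d(d-1)$ is always even. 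The only real obstacle is keeping the algebraic juggling organised — once the commutation $F^2R = RF^2$ is spotted, every remaining step is bookkeeping of the scalars $\zeta^{-3(d-1)}$ produced by $Z^3 = I$ together with the relation $F^4 = I$.
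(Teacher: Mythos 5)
Your proof is correct and rests on exactly the same ingredients as the paper's: $Z^3=I$ (i.e.\ $(RF)^3=\zeta^{-3(d-1)}I$), $F^4=I$, the commutation $F^2R=RF^2$, and $R^{2d}=I$, with identical scalar bookkeeping $\zeta^{-6d(d-1)}=(-1)^{d(d-1)/2}$. The only difference is organisational — you reduce the claimed identity backwards to the conjugation relation $FRF^{-1}=\zeta^{-3(d-1)}R^{-1}FR^{-1}$ (and, to your credit, actually verify $F^2R=RF^2$ and $R^{2d}=I$ entrywise, which the paper merely asserts) — so it is essentially the paper's argument.
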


\begin{proof} 
Write $Z=cRF$, $c=\zeta^{d-1}$. Since $Z^3=I$ and $F^4=I$, we have
\begin{align*}
R^2F 
&= R (RF)^2 (RF)^{-1}
= R (\overline{c}Z)^2 (RF)^{-1}
= \overline{c}^2 R Z^{-1} (RF)^{-1}
= \overline{c}^3 R F^{-1}R^{-1} (RF)^{-1} \cr
& = \overline{c}^3 (R F) (F^2R^{-1}) (RF)^{-1}.
\end{align*}
Since the permutation matrix $F^2$ commutes with $R$ (or any power of $R$), 
we have
$$ (R^2F)^2 
= \overline{c}^{6} (R F) (F^2R^{-1})^2 (RF)^{-1} 
= \overline{c}^{6} (R F) R^{-2} (RF)^{-1}, $$
where $\overline{c}^{6}=\zeta^{-6(d-1)}$.
Since $R^{2d}=I$, we obtain
$$ (R^2F)^{2d} 
= \overline{c}^{6d} (R F) R^{-2d} (RF)^{-1} 
= \overline{c}^{6d} I, \qquad
\overline{c}^{6d} = \zeta^{-6d(d-1)}
= (-1)^{{1\over2}d(d-1)}, $$
which completes the proof.
\end{proof}





\begin{theorem}
\label{ShayneTfiniteorder}
The reconstruction operator 
$T$ of (\ref{Tcdefn}) has finite order, i.e.,
$$ (\sqrt{d}T)^{6d} = (-1)^{{1\over2}d(d-1)} I. $$
\end{theorem}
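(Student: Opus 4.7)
The strategy is to uncover a column-wise block structure for $(\sqrt{d}T)^3$ in which each block is built from $M:=R^2F$, reducing the computation of $(\sqrt{d}T)^{6d}$ to a block-by-block application of Lemma~\ref{RsquareFlemma}.

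First I would compute $(\sqrt{d}T)^3$ directly. Iterating the definition (\ref{Tcdefn}) three times and collapsing one of the inner sums via the orthogonality $\sum_u\go^{u(s+k)}=d\,\gd_{s+k,0}$ yields
\[
(\sqrt{d}T)^3c_{rs}\;=\;\frac{\go^{r(r-s)}}{\sqrt{d}}\sum_m\go^{(r-s)m}c_{m,-s}.
\]
Organising $c\in\CC^{\Zd\times\Zd}$ by its columns $c^{(y)}\in\CC^{\Zd}$ (so that $c^{(y)}_m=c_{m,y}$), this identity says that the $(-y)$-th column of $(\sqrt{d}T)^3c$ is exactly $A_yc^{(y)}$, where
\[
A_y\;:=\;R^2F\,\rho_{y,y},\qquad \rho_{y,y}=S^y\gO^y.
\]
The identification requires only the elementary relations $\gO^yF=FS^y$ and $F\gO^y=S^{-y}F$, together with the observation $R^2=\diag(\go^{u^2})$. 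Hence $(\sqrt{d}T)^3=PD$, where $D$ is block-diagonal (in the column decomposition) with $A_y$ on the $y$-th block, and $P$ swaps the $y$-th and $(-y)$-th column blocks.

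Since $P^2=I$, a direct computation shows that $((\sqrt{d}T)^3)^{2n}$ is block-diagonal with $(A_{-y}A_y)^n$ on the $y$-th block, so $(\sqrt{d}T)^{6d}=((\sqrt{d}T)^3)^{2d}$ has blocks $(A_{-y}A_y)^d$. The crucial input is the identity $M^{-1}\rho_{1,1}M=\rho_{-1,-1}$ \emph{exactly}, with no scalar factor; this drops out of $FSF^{-1}=\gO$, $F\gO F^{-1}=S^{-1}$ and $R^2SR^{-2}=\go S\gO^2$ after a short manipulation. Combined with $\rho_{y,y}=\go^{-y(y-1)/2}\rho_{1,1}^y$ (from $\rho_{a,b}^n=\go^{abn(n-1)/2}\rho_{na,nb}$), this promotes to $M^{-1}\rho_{y,y}M=\rho_{-y,-y}$ for every $y$, and rearranging gives
\[
A_{-y}A_y\;=\;M\rho_{-y,-y}M\rho_{y,y}\;=\;M^2\rho_{y,y}^2\;=\;\go^{y^2}M^2\rho_{2y,2y}.
\]
Replacing $y$ by $-y$ in the same relation gives $M\rho_{y,y}M^{-1}=\rho_{-y,-y}$, and hence $M^2\rho_{2y,2y}M^{-2}=\rho_{2y,2y}$, i.e.\ $M^2$ commutes with $\rho_{2y,2y}$. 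Therefore $(M^2\rho_{2y,2y})^d=M^{2d}\rho_{2y,2y}^d$.

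Finally, Lemma~\ref{RsquareFlemma} gives $M^{2d}=(-1)^{d(d-1)/2}I$, while the power formula yields $\rho_{2y,2y}^d=\go^{2y^2d(d-1)}\rho_{2dy,2dy}=I$ (since $\go^d=1$). Putting everything together,
\[
(A_{-y}A_y)^d\;=\;\go^{dy^2}\cdot(-1)^{d(d-1)/2}I\;=\;(-1)^{d(d-1)/2}I
\]
for every $y$, so $(\sqrt{d}T)^{6d}=(-1)^{d(d-1)/2}I$. The step I expect to require the most care is the phase bookkeeping in the middle paragraph: one must verify that $M^{-1}\rho_{1,1}M=\rho_{-1,-1}$ has no extra root of unity (i.e.\ that the Heisenberg cocycle vanishes on $\rho_{1,1}$ for $M=R^2F$). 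Once this is in hand, every other ingredient reduces to standard Heisenberg commutation relations and the already-established Lemma~\ref{RsquareFlemma}.
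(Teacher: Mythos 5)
Your proposal is correct and follows essentially the same route as the paper: cube $\sqrt{d}T$ to expose the permutation-block structure whose blocks are built from $R^2F$ (your $A_y=R^2F\rho_{y,y}$ is exactly the paper's block $R^2\gO^{y}F\gO^{y}$ via $\gO^yF=FS^y$), square to obtain a block-diagonal operator, and finish with Lemma~\ref{RsquareFlemma}. The only difference is cosmetic: the paper writes the $y$-th diagonal block of the sixth power as the conjugate $\gO^{-y}(R^2F)^2\gO^{y}$, whose $d$-th power is immediately $(-1)^{{1\over2}d(d-1)}I$, whereas you reach the same conclusion through the Heisenberg identity $M^{-1}\rho_{y,y}M=\rho_{-y,-y}$ and the commutation of $M^2$ with $\rho_{2y,2y}$.
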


\begin{proof} We consider $T$ with respect to the standard 
basis $E_{jk}=e_je_k^*$ for matrices, ordered so that the 
coordinates of $c$ have the block structure
$[c]=(c_0,\ldots,c_{d-1})^T$, where $c_j$ is the $j$-th column
of the matrix $c$ (this is the order of matlab's  
{\tt reshape(c,d\verb!^!2,1)}).

The $(j,k)$-block $A_{jk}$ of the (block) matrix representation $[\sqrt{d}T]$ 
of $\sqrt{d}T$ is given by
\begin{align*}
A_{jk} v &= \hbox{$j$-th column of $\sqrt{d}T([0\ldots0,v,0 \ldots0])$ \quad
($v$ is the $k$-th column)} \cr
&={1\over\sqrt{d}}\sum_{a,b} [0\ldots0,v,0 \ldots0]_{ab} (S^a\gO^b)^*e_j 
={1\over\sqrt{d}}\sum_{a} v_a \gO^{-k} S^{-a} e_j \cr
&={1\over\sqrt{d}}\sum_{a} (\gO^{-k} P_{-1} S^{-j}e_a) v_a 
={1\over\sqrt{d}} (\gO^{-k} P_{-1} S^{-j}) v,
\end{align*}
so that
$$ A_{jk} = {1\over\sqrt{d}} \gO^{-k} P_{-1} S^{-j}
, \qquad
(A_{jk})_{ab} = {1\over\sqrt{d}} \go^{-ak}\gd_{a,j-b}. $$
The $(j,k)$-block $B_{jk}$ of $[\sqrt{d}T]^2$ is given by
\begin{align*}
(B_{jk})_{ab} 
&= \bigl(\sum_r A_{jr}A_{rk}\bigl)_{ab}
= \sum_r\sum_t (A_{jr})_{at}(A_{rk})_{tb}
= {1\over d} \sum_r\sum_t \go^{-ar}\gd_{a,j-t} \go^{-tk}\gd_{t,r-b} \cr
&= {1\over d} \go^{-a(j-a+b)}\go^{-(j-a)k}
= {1\over d} \go^{a^2 -aj +ak-jk -ab }.
\end{align*}
The $(j,k)$-block $C_{jk}$ of $[\sqrt{d}T]^3$ is given by
\begin{align*}
(C_{jk})_{ab} 
&= \bigl(\sum_r B_{jr}A_{rk}\bigl)_{ab}
= \sum_r\sum_t (B_{jr})_{at}(A_{rk})_{tb}
= \sum_r\sum_t (B_{jr})_{at}(A_{rk})_{tb} \cr
&= {1\over d\sqrt{d}} \sum_r\sum_t \go^{a^2 -aj +ar-jr -at }
 \go^{-tk}\gd_{t,r-b}
= {1\over d\sqrt{d}} \sum_r \go^{a^2 -aj +ar-jr -a(r-b) }
 \go^{-(r-b)k} \cr
&= {1\over d\sqrt{d}} \go^{a^2 -aj +ab +bk} \sum_r \go^{-r(j+k)}
={1\over\sqrt{d}} \go^{a^2 -aj +ab +bk} \gd_{j,-k}
= (R^2\gO^{-j} F \gO^k)_{ab} \gd_{j,-k},
\end{align*}
so that
$$ C_{jk} 
=\begin{cases}
0, & k\ne -j; \cr
R^2 \gO^{-j} F \gO^{-j}, & k=-j.
\end{cases}
$$
It therefore follows,
that $[\sqrt{d}T]^6$ is block diagonal, with diagonal blocks 
$$ Q_{jj} = C_{j,-j}C_{-j,j}
= (R^2 \gO^{-j} F \gO^{-j}) (R^2 \gO^j F \gO^j)
= \gO^{-j} (R^2 F)^2 \gO^j. $$
Thus $[\sqrt{d}T]^{6d}$ is block diagonal, 
and, by Lemma \ref{RsquareFlemma}, 
its diagonal blocks simplify to
$$ \gO^{-j} (R^2F)^{2d} \gO^j
=  \gO^{-j} (-1)^{{1\over2}d(d-1)}I \gO^j
= (-1)^{{1\over2}d(d-1)}I, $$
i.e., $[(\sqrt{d}T)^{6d}] = [(-1)^{{1\over2}d(d-1)}I]$.
\end{proof}

Since the projective representation (\ref{rhoalphadef}) of $\ZZ_d\times\ZZ_d$
is {\it not} an ordinary representation, 
there is no canonical presentation of the 
projective Fourier transform at $\rho$, as with the Fourier transform for
$\ZZ_d$, which gives $F$ (of order $4$), by taking $\ga=1$.
Indeed, one could take $\tilde\rho((j,k)) = b_{jk} S^j\gO^k$,
for any unit scalars $b_{jk}$, with a corresponding $\tilde\ga$-transform
(reconstruction operator)
$$ \tilde Tc:={1\over d}\sum_{j,k} c_{jk} (b_{jk} S^j\gO^k)^*
={1\over d}\sum_{j,k} c_{jk} \overline{b_{jk}} \go^{jk} S^{-j} \gO^{-k}. $$
For a general choice for $b_{jk}$, $\sqrt{d}\tilde T$ is again unitary, 
but not of finite order. During our investigation, 
we came across various
choices giving operators of finite order, in particular
\begin{equation}
\label{LenT}
Lc :={1\over d}\sum_{j,k} c_{jk} (\go^{jk} S^j\gO^k)^*
={1\over d}\sum_{j,k} c_{jk} S^{-j} \gO^{-k}.
\end{equation}
It can be shown that $L$ has the compact form 
\begin{equation}
\label{Lcompactform}
Lc = F^*(F\circ(F^* c F^*)),
\end{equation}
where $\circ$ is the Hadamard product.
From this, we obtain the following.

\begin{theorem}
The operator $L$ of (\ref{LenT}) satisfies
$$ (\sqrt{d}L)^4 c = (F^* R^{-2} F)c \qquad
\hbox{(matrix multiplication)}, $$
and hence, since $R^{2d}=I$, we have
$$ (\sqrt{d}L)^{4d}=I. $$
\end{theorem}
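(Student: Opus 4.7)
Plan for the proof of the theorem.

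The approach is to iterate the compact form $Lc=F^*(F\circ(F^*cF^*))$ four times and identify the result with left multiplication by $F^*R^{-2}F$. A preliminary unpacking of the compact form, using the character orthogonality $\sum_r\go^{r(q-p-a)}=d\,\delta_{a,q-p}$ once, produces the pointwise expression
$$(\sqrt d\,Lc)_{pq}=\frac{1}{\sqrt d}\sum_b \go^{-bq}\,c_{q-p,b},$$
which is the building block to be iterated. This derivation is routine: expand $F^*cF^*$, form the Hadamard product with $F$, and collapse the inner sum via the $\delta$-function.

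Define the iterated kernels $K_n$ by $((\sqrt d\,L)^n c)_{pq}=\sum_{a,b}K_n(p,q,a,b)\,c_{ab}$, so $K_1(p,q,a,b)=\frac{1}{\sqrt d}\go^{-bq}\delta_{a,q-p}$. Each composition introduces a new sum over an intermediate second index $q'$, and at $n=3$ a Gauss sum appears. The decisive step happens at $n=4$: the outermost $q'$-sum is again an orthogonality sum that yields a factor $d\,\delta_{b,q}$, killing all $b\neq q$ and making $(\sqrt d\,L)^4$ a left-multiplication operator (one that preserves each column of $c$). The surviving kernel at $b=q$ first presents itself in the form
$$K_4(p,q,a,q)=\frac{1}{d}\go^{q(p-a-q)}\sum_s \go^{-s^2-s(a+2q-p)},$$
which still appears to depend on $q$. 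The substitution $s\mapsto s-q$ inside the Gauss sum generates precisely the factor $\go^{q^2+qa-qp}$ needed to cancel the prefactor, yielding
$$K_4(p,q,a,q)=\frac{1}{d}\sum_s \go^{-s^2-s(a-p)}.$$
Because $\mu^2=\go$ and $\go^d=1$, one has $R^2_{rr}=\mu^{2r(r+d)}=\go^{r^2}$, and a direct computation of $(F^*R^{-2}F)_{pa}$ produces the same Gauss sum (after the cosmetic change $s\mapsto -r$). Hence $(\sqrt d\,L)^4 c=(F^*R^{-2}F)c$.

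The second statement is then immediate by associativity of matrix multiplication: since $(\sqrt d\,L)^4$ acts as left multiplication by $F^*R^{-2}F$, iteration gives $(\sqrt d\,L)^{4d}c=(F^*R^{-2}F)^d c=F^*R^{-2d}F\,c=c$, using $R^{2d}=I$, so $(\sqrt d\,L)^{4d}=I$. The genuinely non-routine part is the $n=4$ step: before the shift $s\mapsto s-q$ the expression for $K_4(p,q,a,q)$ carries $q^2$-, $qa$- and $qp$-terms in both the prefactor and the Gauss sum, and it is only the exact conspiracy between these quadratic terms that removes the $q$-dependence and exposes the column-preserving matrix structure. Once that cancellation is spotted, the rest of the proof is just unpacking definitions, applying character orthogonality, and comparing two Gauss sums.
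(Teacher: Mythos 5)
Your plan is correct and takes a genuinely different route from the paper. The paper's proof first verifies the compact Hadamard form $Lc=F^*(F\circ(F^*cF^*))$, then conjugates by the permutation $F^2$ to get $L^2c=F^*M^2(FcF)F^*$ with the auxiliary operator $Mc:=(F\circ c)F$, computes $(M^2c)_{jk}=\tfrac1d\go^{-j(j+k)}c_{j,-(j+k)}$ and hence $M^4c=\tfrac1{d^2}R^{-2}c$ entrywise, and reads off $(\sqrt d\,L)^4c=F^*R^{-2}Fc$ by operator algebra; no Gauss sum ever appears because orthogonality collapses each intermediate sum, leaving only phase factors. You instead iterate the entrywise formula $(\sqrt d\,Lc)_{pq}=\tfrac1{\sqrt d}\sum_b\go^{-bq}c_{q-p,b}$ (which in fact follows directly from the definition (\ref{LenT}), so your argument does not really need the compact form (\ref{Lcompactform}) at all) as a kernel computation. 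I checked your key formulas: the $q'$-dependence in the fourth iterate is $\go^{q'(b-q)}$, so the orthogonality sum does produce $d\,\delta_{b,q}$; your expression $K_4(p,q,a,q)=\tfrac1d\go^{q(p-a-q)}\sum_s\go^{-s^2-s(a+2q-p)}$ is correct; the shift $s\mapsto s-q$ produces exactly the factor $\go^{q^2+qa-qp}$ that cancels the prefactor; and the surviving kernel $\tfrac1d\sum_s\go^{-s^2-s(a-p)}$ equals $(F^*R^{-2}F)_{pa}=\tfrac1d\sum_r\go^{-r^2+r(a-p)}$ via $s\mapsto-r$, using $\mu^2=\go$ and $\go^d=1$. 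The concluding step $(\sqrt d\,L)^{4d}c=(F^*R^{-2}F)^dc=F^*R^{-2d}Fc=c$ is the same in both arguments. What the paper's route buys is structure: the operator $M$ and the conjugation-by-$F^2$ trick make it transparent why the fourth power is left multiplication, at the price of verifying (\ref{Lcompactform}). Your route is more elementary and self-contained, but its burden is the quadratic-exponent bookkeeping, which you have identified as the crux and executed correctly; writing out $K_2$ and $K_3$ explicitly is the only remaining routine work.
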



\begin{proof}
We first verify the compact form (\ref{Lcompactform}),
$$ (Lc)_{ab} ={1\over d}\sum_{j,k} c_{jk} (S^{-j} \gO^{-k})_{ab}
={1\over d}\sum_{j,k} c_{jk}\go^{-kb}\gd_{a,b-j}
={1\over d}\sum_{k} c_{b-a,k}\go^{-kb}, $$
\begin{align*}
\bigl( F^*( & F\circ (F^* c F^*))\bigr)_{ab}
 = \sum_{r,t,k} (F^*)_{ar}(F)_{rb} (F^*)_{rt}c_{tk}(F^*)_{kb} \cr
&={1\over d^2}\sum_{r,t,k} \go^{-ar+rb-rt}c_{tk}\go^{-kb}
={1\over d}\sum_{t,k} \gd_{t,b-a}c_{tk}\go^{-kb}
={1\over d}\sum_{k} c_{b-a,k}\go^{-kb}.
\end{align*}
Define the operation $\tilde A=P_{1-}AP_{-1}$ 
of conjugation by the 
permutation matrix $P_{-1}=F^2$ of order $2$.
This distributes over matrix multiplication,
the Hadamard product, leaving $F$ (and its powers) unchanged, so that
$Lc= F^* (F\circ (F\tilde c F))$, and
\begin{align*}
L^2c 
& = F^* (F\circ (F [F^* (F\circ (F\tilde c F))]\,\tilde{} F))
= F^* (F\circ (F [F^* (F\circ (F c F))] F)) \cr
&= F^* ([F\circ ([F\circ (F c F)] F)] F) F^*
= F^* M^2(F c F) F^*, 
\end{align*}
where
$$ Mc:= (F\circ c)F, \qquad
(Mc)_{jk} 
= \sum_t (F\circ c)_{jt} F_{tk}
= {1\over d} \sum_t \go^{jt} c_{jt} \go^{tk} 
= {1\over d} \sum_t \go^{(j+k)t} c_{jt}, $$
\begin{align*}
(M^2 c)_{jk} 
&= {1\over d} \sum_t \go^{(j+k)t} (Mc)_{jt}
= {1\over d} \sum_t \go^{(j+k)t} {1\over d}\sum_r \go^{(j+t)r} c_{jr}
={1\over d} \go^{-j(j+k)}c_{j,-(j+k)}, \cr
(M^4 c)_{jk} 
&= {1\over d^2} \go^{-j(j+k)} \go^{-j(-k)} c_{jk} 
= {1\over d^2} \mu^{-2j(j+d)}c_{jk}= ({1\over d^2}R^{-2}c)_{jk}.
\end{align*}
Thus, $M^4 c={1\over d^2}R^{-2}c$, which gives
\begin{align*}
(\sqrt{d}L)^4 c 
&= d^2 F^* M^2(F[F^* M^2(FcF) F^*]F) F^*
= d^2 F^* M^4(FcF) F^* \cr
&= F^*( R^{-2} F c F)F^*
= F^*R^{-2}F c, 
\end{align*}
and $(\sqrt{d}L)^{4d} c =  (F^*R^{-2}F)^d c 
= F^*R^{-2d}F c
= c$.
\end{proof}

\section{Equivalent equations for Heisenberg frames}

In this section, we give another condition that ensures $Tc$ has rank one,
which leads to a set of equations for $c$ which express in terms of
polynomials $p_j(z)$ which are $z$--transforms of the rows of $c$.
These polynomials $p_j(z)$ have interesting Riesz--type factorisation
properties, which we use to find a solution for $d=4$.

We use the following condition which ensures that
a matrix $A\in\CC^{d\times d}$ has rank one.

\begin{lemma}
\label{rowsymbollemma} 
$A=vv^*$ for some $v\in\Cd$  
with $v_m\ne0$ if and only if $a_{mm}>0$ and
\begin{equation}
\label{rankoneformula}
A = {1\over a_{mm}} 
\bmat{a_{0m}\cr a_{1m}\cr a_{2m}\cr\vdots\cr}
\bmat{a_{0m}\cr a_{1m}\cr a_{2m}\cr\vdots\cr}^* .
\end{equation}
\end{lemma}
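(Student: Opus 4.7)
The statement is the standard fact that a rank-one positive semidefinite matrix is reconstructed from any one of its non-vanishing diagonal entries together with the corresponding column, so I would treat it as a direct two-direction verification.

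For the forward direction ($\Rightarrow$), I would assume $A = vv^*$ and simply read off the entries $a_{jk} = v_j\overline{v_k}$. Taking $k = m$ shows the $m$-th column of $A$ is $\overline{v_m}\,v$, and $a_{mm} = |v_m|^2 > 0$ by the hypothesis $v_m \ne 0$. Substituting into the right-hand side of (\ref{rankoneformula}) gives
$$
\frac{1}{a_{mm}}(\overline{v_m}v)(\overline{v_m}v)^* \;=\; \frac{|v_m|^2}{|v_m|^2}\,vv^* \;=\; A,
$$
which is the claimed identity.

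For the reverse direction ($\Leftarrow$), the idea is to define $v$ by the formula suggested by (\ref{rankoneformula}), namely $v := \frac{1}{\sqrt{a_{mm}}}(a_{0m},a_{1m},\ldots)^T$, which is well-defined since $a_{mm}>0$. Then the rank-one formula reads $A = vv^*$ directly, and checking the $m$-th coordinate gives $v_m = a_{mm}/\sqrt{a_{mm}} = \sqrt{a_{mm}} > 0$, so $v_m\ne 0$ as required.

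Since both directions reduce to unpacking $a_{jk}=v_j\overline{v_k}$, there is no genuine obstacle; the only thing to watch is that the hypothesis $v_m\neq 0$ in the forward direction is exactly what makes $a_{mm}>0$, so that division by $a_{mm}$ is legitimate and the formula is not vacuous. I would keep the write-up to a few lines.
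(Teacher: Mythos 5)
Your proof is correct and follows essentially the same route as the paper's: in the forward direction you identify the $m$-th column of $A$ as $\overline{v_m}v$ and note $a_{mm}=|v_m|^2>0$ before substituting, and in the converse you define $v$ as the normalised $m$-th column with $v_m=\sqrt{a_{mm}}>0$, exactly as the paper does. No gaps; nothing further needed.
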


\begin{proof}
First suppose that $A=vv^*$ for such a $v$.
Then
$$ \overline{v_m} v = A e_m 
= \bmat{a_{0m}\cr a_{1m}\cr a_{2m}\cr\vdots\cr}, 
\qquad |v_m|^2=(\overline{v_m}v)_m =  a_{mm}, $$
so that $a_{mm}>0$, and (\ref{rankoneformula}) holds
since $(\overline{v_m}v)(\overline{v_m}v)^*=|v_m|^2(vv^*)$.

Conversely, suppose that (\ref{rankoneformula}) holds with $a_{mm}>0$,
then clearly $A=vv^*$ for
$$ v := {1\over\sqrt{a_{mm}}}
\bmat{a_{0m}\cr a_{1m}\cr a_{2m}\cr\vdots\cr}, 
\qquad v_m=\sqrt{a_{mm}}. $$
\end{proof}

\noindent
In particular,
$Tc=vv^*$ for some $v\in\Cd$ with $v_m\ne0$ if and only if
$(Tc)_{mm}>0$ and
\begin{equation}
\label{Tcjkmatcdn}
(Tc)_{jk} = { (Tc)_{jm}\overline{(Tc)_{km}}\over (Tc)_{mm}}. 
\end{equation}

We now express (\ref{Tcjkmatcdn}) in terms of the following $z$--transform.

\begin{definition}
For $j=0,1,\ldots,d-1$,
the $j$--th {\bf symbol} of $c$ is defined to be the polynomial
$$ p_j(z) := \sum_r c_{-j,r}(\omega^jz)^{-r}. $$
\end{definition}

\noindent
This is the $z$--transform of the $j$--th row of the
matrix $(\omega^{jk} c_{-j,-k})$, since
$$ \sum_k \omega^{jk}c_{-j,-k}z^k
= \sum_r \omega^{-jr}c_{-j,r}z^{-r}
= \sum_r c_{-j,r}(\omega^jz)^{-r}. $$

We think of $p_j(z)$ as being defined only on $z^d=1$, since
each polynomial of degree $d$ is uniquely determined by its values
at the $d$--th roots of unity. Clearly, we can recover $c$
from the $d^2$ values $p_j(\omega^k)$, $j,k=0,\ldots,d-1$.
Using (\ref{Tcjk}), we calculate
$$ p_{j-k}(\omega^k) 
= \sum_r c_{k-j,r}(\omega^{j-k}\omega^k)^{-r}
= \sum_r c_{k-j,r}(\omega^j)^{-r} 
= d (Tc)_{jk}. $$

Hence (\ref{Tcjkmatcdn}) can be expressed as follows.

\begin{theorem}
\label{rowsymbolth}
$Tc=vv^*$ for  
 $v\in\Cd$  
with $v_m\ne0$ if and only if the symbols of $c$ satisfy
\begin{equation}
\label{rankformula}
p_0(\omega^m)>0, \qquad p_{j-k}(\omega^k) =
{p_{j-m}(\omega^m)\overline{p_{k-m}(\omega^m)}\over p_0(\omega^m)},
\quad j,k=0,\ldots,d-1.
\end{equation}
\end{theorem}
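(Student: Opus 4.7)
The plan is to read off this theorem as a direct translation of Lemma \ref{rowsymbollemma} (applied to $A=Tc$) through the identity
$$ d\,(Tc)_{jk} = p_{j-k}(\omega^k), $$
which was already derived in the paragraph preceding the theorem. So essentially no new work beyond substitution is required; the content is already present.

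First I would invoke Lemma \ref{rowsymbollemma} with $A=Tc$ and the row index $m$: this gives that $Tc=vv^*$ with $v_m\neq 0$ if and only if $(Tc)_{mm}>0$ and condition (\ref{Tcjkmatcdn}) holds for all $j,k$. Next, I would replace each matrix entry of $Tc$ appearing in those two conditions by the corresponding value of a symbol: namely
$$ (Tc)_{mm}={1\over d}p_0(\omega^m), \qquad (Tc)_{jm}={1\over d}p_{j-m}(\omega^m), $$
$$ (Tc)_{km}={1\over d}p_{k-m}(\omega^m), \qquad (Tc)_{jk}={1\over d}p_{j-k}(\omega^k), $$
all of which are just instances of $d(Tc)_{jk}=p_{j-k}(\omega^k)$.

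The positivity condition $(Tc)_{mm}>0$ then becomes $p_0(\omega^m)>0$. Plugging the four identities into (\ref{Tcjkmatcdn}) yields
$$ {1\over d}p_{j-k}(\omega^k)
= { \bigl({1\over d}p_{j-m}(\omega^m)\bigr)\overline{\bigl({1\over d}p_{k-m}(\omega^m)\bigr)} \over {1\over d}p_0(\omega^m) },$$
and the three factors of $1/d$ on the right telescope into a single $1/d$, which cancels the $1/d$ on the left, giving exactly (\ref{rankformula}). I would then note conversely that (\ref{rankformula}) together with $p_0(\omega^m)>0$ reverses the same substitution to recover (\ref{Tcjkmatcdn}) together with $(Tc)_{mm}>0$, and apply Lemma \ref{rowsymbollemma} in the other direction.

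There is no real obstacle here; the only thing to be careful about is the bookkeeping between the indices of the symbols (shift $j-k$ and evaluation point $\omega^k$) and the matrix entries of $Tc$, and the fact that the scaling $1/d$ in $Tc={1\over d}\sum_{j,k}c_{jk}(S^j\Omega^k)^*$ forces $(Tc)_{mm}>0$ to be equivalent to $p_0(\omega^m)>0$ (the strict inequality being preserved since $1/d>0$).
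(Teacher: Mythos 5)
Your proof is correct and is essentially the paper's own argument: the paper obtains the theorem directly by applying Lemma \ref{rowsymbollemma} to $A=Tc$ and substituting the identity $d(Tc)_{jk}=p_{j-k}(\omega^k)$ into $(Tc)_{mm}>0$ and (\ref{Tcjkmatcdn}), exactly as you do. The bookkeeping with the factors of $1/d$ is handled correctly, so nothing is missing.
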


The symbols corresponding to a solution have interesting
Riesz--type factorisation properties, which, for simplicity,
we illustrate when $m=0$.

\begin{corollary}
\label{mequalszero}
$Tc=vv^*$ for  
 $v\in\Cd$  
with $v_0\ne0$ if and only if the symbols of $c$ satisfy
\begin{equation}
\label{rankmeqoneformula}
p_0(1)>0, \qquad p_{j-k}(\omega^k) =
{p_{j}(1)\overline{p_{k}(1)}\over p_0(1)},
\quad j,k=0,\ldots,d-1.
\end{equation}
Moreover, these have the factorisations
\begin{equation}
\label{Rfactor}
|p_j(z)|^2 = p_0(z)p_0(\omega^jz), \qquad j=0,\ldots,d-1, 
\end{equation}
and the following invariant
\begin{equation}
\label{Invar}
\prod_k p_j(\omega^k)=\prod_k p_0(\omega^k),
\qquad j=0,\ldots,d-1.
\end{equation}
\end{corollary}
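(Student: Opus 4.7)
The equivalence in the first sentence is just Theorem \ref{rowsymbolth} specialized to $m=0$ (since $\omega^0=1$), so nothing more is needed there. The real content is to derive the factorisation (\ref{Rfactor}) and the invariant (\ref{Invar}) from the identity
$$p_{j-k}(\omega^k)={p_j(1)\overline{p_k(1)}\over p_0(1)},\qquad j,k=0,\ldots,d-1.$$
The convenient step is to reindex: setting $\ell:=j-k$ (so $j=\ell+k$) rewrites this as
$$p_\ell(\omega^k)={p_{\ell+k}(1)\,\overline{p_k(1)}\over p_0(1)},\qquad \ell,k\in\Zd,$$
with all subscripts read mod $d$. Everything else falls out of this one formula.

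For (\ref{Rfactor}), recall that the symbols $p_j$ are determined by their values on the $d$-th roots of unity, so it suffices to verify the identity at $z=\omega^k$ for each $k$. Take $\ell=j$ in the reindexed formula to obtain
$$|p_j(\omega^k)|^2={|p_{j+k}(1)|^2\,|p_k(1)|^2\over p_0(1)^2},$$
and take $\ell=0$ to obtain $p_0(\omega^k)=|p_k(1)|^2/p_0(1)$, so that
$$p_0(\omega^k)\,p_0(\omega^{j+k})={|p_k(1)|^2\,|p_{j+k}(1)|^2\over p_0(1)^2}=|p_j(\omega^k)|^2.$$
Since $\omega^{j+k}=\omega^j z$ when $z=\omega^k$, this is exactly (\ref{Rfactor}) at every $d$-th root of unity.

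For (\ref{Invar}), multiply the reindexed formula over $k=0,\ldots,d-1$ (with $\ell=j$ fixed):
$$\prod_{k=0}^{d-1}p_j(\omega^k)={\prod_k p_{j+k}(1)\cdot\prod_k\overline{p_k(1)}\over p_0(1)^d}.$$
Because $k\mapsto j+k$ is a bijection of $\Zd$, $\prod_k p_{j+k}(1)=\prod_k p_k(1)$, and the right-hand side becomes $\prod_k|p_k(1)|^2\big/p_0(1)^d$, which is manifestly independent of $j$. In particular, it equals the same expression with $j=0$, namely $\prod_k p_0(\omega^k)$, by the $\ell=0$ evaluation already used above.

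There is no real obstacle; the one subtlety worth flagging is that $|p_j(z)|^2$ in (\ref{Rfactor}) must be read as a statement about the values of the symbols on the torus $z^d=1$ (the convention set just after the definition of $p_j$), and it is precisely this convention that lets the two sides of (\ref{Rfactor}) be compared as degree-$<d$ polynomials in $z$ once both are viewed via their values at the $d$-th roots of unity.
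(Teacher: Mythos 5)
Your proposal is correct and follows essentially the same route as the paper's proof: specialize Theorem \ref{rowsymbolth} to $m=0$, reindex to $p_j(\omega^k)=p_{j+k}(1)\overline{p_k(1)}/p_0(1)$, use the $j=0$ case $p_0(\omega^k)=|p_k(1)|^2/p_0(1)$ together with modulus-squaring to get (\ref{Rfactor}), and take the product over $k$ (using that $k\mapsto j+k$ permutes $\Zd$) to get (\ref{Invar}). The only difference is cosmetic: you conclude (\ref{Invar}) by observing the product is independent of $j$, while the paper substitutes the factorisation back in, and your remark about reading (\ref{Rfactor}) only on $z^d=1$ matches the paper's stated convention.
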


\begin{proof} For (\ref{rankmeqoneformula}) take $m=0$ 
in Theorem \ref{rowsymbolth}.
Now re-index to get
$$ p_j(\omega^k)={p_{j+k}(1)\overline{p_k(1)}\over p_0(1)}, $$
and take the modulus squared of both sides
$$|p_j(\omega^k)|^2 = {|p_k(1)|^2\over p_0(1)} {|p_{j+k}(1)|^2\over p_0(1)}. $$
But, from $j=0$ in the first equation,
$$ p_0(\omega^k)={|p_k(1)|^2\over p_0(1)}, $$
so that
$$ |p_j(\omega^k)|^2 = p_0(\omega^k)p_0(\omega^{j+k})
= p_0(\omega^k)p_0(\omega^j\omega^k), $$
i.e., setting $z=\omega^k,$
$$|p_j(z)|^2=p_0(z)p_0(\omega^jz).$$

Take the product over $k$ of the re-indexed equation
$$ \prod_k p_j(\omega^k)
= {1\over (p_0(1))^d} \prod_k p_{j+k}(1)\overline{p_k(1)}
= {1\over (p_0(1))^d}\prod_k|p_k(1)|^2, $$
(since each $p_k(1)$ and its conjugate appears exactly {\it once} in
the product). Thus, by (\ref{Rfactor})
$$ \prod_k p_j(\omega^k)
= {1\over (p_0(1))^d}\prod_{k=0}^{d-1} p_0(1)p_0(\omega^k)
= \prod_{k=0}^{d-1} p_0(\omega^k). $$
\end{proof}

\medskip For completeness, we note that the Hermitian
condition of Lemma \ref{evaluelemma} can also be succinctly expressed
in terms of row symbols.

\begin{lemma}
\label{hermcdn}
$Tc$ is Hermitian if and only if
the symbols of $c$ satisfy
$$ \overline{p_j(z)} = p_{-j}(\omega^j z), \qquad j=0,\ldots,d-1. $$
\end{lemma}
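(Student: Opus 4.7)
The plan is to translate the Hermitian condition $(Tc)_{jk} = \overline{(Tc)_{kj}}$ entry-by-entry into the language of row symbols, using the identity
$$ d\,(Tc)_{jk} = p_{j-k}(\omega^k), \qquad (j,k)\in\Zd^2, $$
already established in Section 2. Substituting this identity on both sides of the Hermitian condition yields the equivalent system
$$ p_{j-k}(\omega^k) = \overline{p_{k-j}(\omega^j)}, \qquad (j,k)\in\Zd^2. $$

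To recognise this as the statement of the lemma, I would perform the change of variables $m:=j-k$ and $z:=\omega^k$. Then $\omega^j=\omega^{k+m}=\omega^m z$, and the system rewrites as
$$ p_m(z) = \overline{p_{-m}(\omega^m z)}, \qquad m\in\Zd,\ z\in\{\omega^k:k\in\Zd\}, $$
which, after conjugation, is exactly $\overline{p_m(z)} = p_{-m}(\omega^m z)$. The substitution $(j,k)\mapsto(m,k)$ is a bijection of $\Zd^2$ and the map $k\mapsto\omega^k$ is a bijection of $\Zd$ with the $d$-th roots of unity, so the two systems are equivalent, giving both directions of the "if and only if." Since a symbol $p_j(z)$ depends on $z$ only through the $d$ monomials $z^{-r}$, $r\in\Zd$, its values on the $d$-th roots of unity determine it completely, so there is no loss in stating the condition for $z$ with $z^d=1$.

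There is no real obstacle here: the content of the lemma is just the bookkeeping translating the index $k$ of $(Tc)_{jk}$ into the argument of the symbol and the index $j-k$ into the symbol's subscript. The only thing to verify carefully is that the two variable changes are bijections and that the shift $\omega^j=\omega^m z$ is handled consistently.
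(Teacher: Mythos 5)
Your proof is correct, and it takes a slightly different route from the paper's. You pivot on the value identity $d\,(Tc)_{jk}=p_{j-k}(\omega^k)$ (which the paper derives in Section 4 just after the definition of the symbols, not in Section 2 as you say — a harmless misattribution), translate entrywise Hermiticity $(Tc)_{jk}=\overline{(Tc)_{kj}}$ into $p_{j-k}(\omega^k)=\overline{p_{k-j}(\omega^j)}$, and reindex by the bijection $(j,k)\mapsto(m,z)=(j-k,\omega^k)$, noting $\omega^j=\omega^m z$; your closing remark that a symbol is determined by its values at the $d$-th roots of unity settles the interpretation of the stated identity. The paper instead works at the level of coefficients: it expands $\overline{p_j(z)}=\sum_k\overline{c_{-j,-k}}\,\omega^{-jk}z^{-k}$ and $p_{-j}(\omega^jz)=\sum_k c_{jk}z^{-k}$ and equates coefficients of $z^{-k}$, so the symbol condition is identified with the entrywise condition $c_{jk}=\omega^{-jk}\overline{c_{-j,-k}}$ on $c$, i.e.\ condition (i) of Lemma \ref{evaluelemma}. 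Each approach buys something: the paper's makes explicit the link to condition (i), which is the form actually used in Theorem \ref{cvcharthm} and Proposition \ref{dthreeprop}, but it tacitly uses that this coefficient condition is equivalent to (not merely sufficient for) Hermiticity of $Tc$ — that needs uniqueness of the expansion in the orthogonal basis $\{(S^j\Omega^k)^*\}$, since Lemma \ref{evaluelemma} only proved the forward implication; your version sidesteps that point entirely by working with the entries of $Tc$ itself, at the modest cost of quoting the identity $d\,(Tc)_{jk}=p_{j-k}(\omega^k)$, which is available in the paper before the lemma.
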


\begin{proof}
From the definition, we calculate
$$ \overline{p_j(z)}  
= \overline{ \sum_r c_{-j,r} (\omega^jz)^{-r}  }
= \sum_r \overline{c_{-j,r}} (\omega^jz)^{r} 
= \sum_k \overline{c_{-j,-k}} \omega^{-jk}z^{-k}, $$
$$ p_{-j}(\omega^j z) 
= \sum_r c_{j,r}(\omega^{-j}(\omega^jz))^{-r}
= \sum_k c_{jk}z^{-k}, $$
and so, by equating the coefficients of $z^{-k}$,
the Hermitian condition $c_{jk}= \omega^{-jk}\overline{c_{-j,-k}}$ is
equivalent to equality of the above symbols.
\end{proof}

\section{The Special Case of $d=3$}

This case already has some interesting geometric features.

Solving the basic equations for the $c_{jk},$ is also
geometrically interesting.

\begin{proposition}
\label{dthreeprop} For $d=3$,
$c$ generates a Heisenberg frame with $v_0\ne0$ if and only if
\item{\rm (a)}
$\displaystyle{p_0(z)= 1+\overline{c_{01}}z+c_{01}z^2}$
and $\displaystyle{p_2(z)=\overline{p_1(\omega^2z)}}$
\quad (Hermitian conditions)
\item{\rm(b)} $\displaystyle{|p_1(z)|^2=p_0(z)p_0(\omega z)}$
\quad (Riesz factorization)
\item{\rm(c)} $\displaystyle{\prod_{k=0}^2p_1(\omega^k)=\prod_{k=0}^2p_0(\omega^k)}$ 
\quad (invariant condition)
\item{\rm(d)} $\displaystyle{|c_{01}|=|c_{1k}|={1\over{2}},\quad k=0,1,2.}$
\end{proposition}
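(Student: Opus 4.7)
My plan is to obtain Proposition \ref{dthreeprop} by specializing the general machinery of Sections 2 and 4 to the case $d=3$. In the forward direction, I would assume $Tc=vv^*$ with $v_0\ne 0$ and read off each of (a)--(d) from earlier results: Theorem \ref{cvcharthm} supplies the Hermitian relation, $c_{00}=1$, and $|c_{jk}|^2=1/4$ for $(j,k)\ne(0,0)$, which specialized to $d=3$ (on $z^3=1$, so $z^{-1}=z^2$ and $\omega^{-1}=\omega^2$) rewrite as (a); conditions (b) and (c) are exactly (\ref{Rfactor}) and (\ref{Invar}) of Corollary \ref{mequalszero}; and (d) collects one representative from each of the four Hermitian-conjugate pairs among the eight nontrivial overlaps.

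For the converse I would assume (a)--(d) and verify the four hypotheses of Theorem \ref{cvcharthm}. Condition (a) encodes both nontrivial instances of Lemma \ref{hermcdn} for $d=3$ (the $j=2$ instance is the complex conjugate of the $j=1$ instance) and also fixes the constant term $c_{00}=1$ via the explicit formula for $p_0$; combined with (d) and the resulting Hermitian relation this yields (i), (ii) and (iii) at once. It remains to prove (iv), i.e.\ that $Tc$ has rank one, for which I would invoke Theorem \ref{rowsymbolth} with $m=0$: it suffices to establish $p_0(1)>0$ and the nine identities (\ref{rankmeqoneformula}).

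Setting $a_k:=p_0(\omega^k)$ (real by Hermiticity of $p_0$, and nonnegative since $\sum a_k=d\trace(Tc)=3$ while (b) forces pairwise products $a_ja_k\ge 0$), most of (\ref{rankmeqoneformula}) is automatic: the three diagonal entries $p_0(\omega^k)=a_k$ are tautological; $p_2(\omega)=\overline{p_1(1)}$ and $p_1(\omega^2)=\overline{p_2(1)}$ drop out of $p_2(z)=\overline{p_1(\omega^2z)}$ in (a); and the remaining two entries are complex conjugates of each other. Thus the entire system collapses to a single identity
$$p_1(\omega)=\frac{p_2(1)\,\overline{p_1(1)}}{p_0(1)}.$$
The moduli of both sides equal $\sqrt{a_1 a_2}$ by (b), so only the phase needs matching. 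Combining (c) with $p_1(\omega^2)=\overline{p_2(1)}$, I would solve the invariant $p_1(1)p_1(\omega)p_1(\omega^2)=a_0a_1a_2$ for $p_1(\omega)=a_0a_1a_2/(p_1(1)\overline{p_2(1)})$, and cross-multiplying against the claimed formula reduces everything to $a_0^2a_1a_2=|p_1(1)|^2|p_2(1)|^2$, which holds by $|p_j(1)|^2=a_0a_j$ from (b).

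The main obstacle is precisely this phase-matching step: (a), (b), (d) already constrain everything about $Tc$ except the phase of the single entry $p_1(\omega)$, and (c) is doing exactly the work of pinning down that last remaining phase, with the verification requiring the moduli supplied by (b) to make the cross-multiplication balance. I would also briefly dispatch the edge case $p_0(1)=0$: (b) would then force $p_j(1)=0$ for all $j$, whence the first column of $Tc$ vanishes and $v_0=0$, a situation the clause ``$v_0\ne 0$'' in the proposition explicitly excludes.
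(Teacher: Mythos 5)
Your proof is essentially correct, but in the converse direction it takes a genuinely different route from the paper. The paper argues via the spectrum: Lemma \ref{evaluelemma} gives the characteristic polynomial $\lambda^3-\lambda^2+a_0$ with $a_0=-\det(Tc)$, so by Theorem \ref{cvcharthm} in the form (iv)$'$ it suffices to show $\det(Tc)=0$, which the authors do by expanding the $3\times3$ determinant of $3(Tc)=(p_{j-k}(\omega^k))$ and cancelling using the invariant (c) and the Riesz factorisation (b). You instead verify rank one entrywise through Theorem \ref{rowsymbolth} with $m=0$: after discarding the identities that follow from Hermiticity and from (b), everything collapses to the single phase identity $p_1(\omega)=p_2(1)\overline{p_1(1)}/p_0(1)$, whose modulus is supplied by (b) and whose phase is pinned by (c). The two computations encode the same information (the paper's vanishing determinant is the expanded form of your single identity), but your version makes transparent exactly what (b) and (c) each contribute, at the cost of a little case bookkeeping; note in particular that solving the invariant for $p_1(\omega)$ divides by $p_1(1)\overline{p_2(1)}$, so you should add the one-line remark that if $a_1=0$ or $a_2=0$ then both sides of the target identity vanish by (b) and the identity holds trivially. (Your description of the diagonal identities as ``tautological'' is also loose --- they are precisely $|p_k(1)|^2=a_0a_k$, i.e.\ instances of (a) and (b) --- but you use those moduli correctly later, so this is only a labelling slip.)

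The one genuine soft spot is the edge case $p_0(1)=0$. Your dispatch is circular: in the ``if'' direction you cannot appeal to the clause $v_0\ne0$ in the statement, since that clause is part of what must be proved. In fact (a)--(d) do not force $p_0(1)>0$: the $d=3$ fiducials $\tfrac{1}{\sqrt2}(0,1,-e^{i\beta})$ have overlaps with $c_{01}=-\tfrac12$, hence $p_0(1)=0$, and satisfy (a)--(d) while having $v_0=0$. So either the hypothesis $p_0(1)>0$ (equivalently $c_{01}\ne-\tfrac12$) must be added to (a)--(d), or the conclusion weakened to ``generates a Heisenberg frame''. You are in good company --- the paper's own proof ends with the flag ``(need to check $v_0\ne0$!)'' and leaves exactly this point open --- but as written your final paragraph is an assertion, not an argument, and should be replaced by this explicit caveat or by the extra hypothesis.
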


\begin{proof}
By Lemma \ref{hermcdn}, the conditions for $Tc$ to be Hermitian are
$$ \overline{p_0(z)}=p_0(z), \qquad 
\overline{p_1(z)}=p_2(\omega z), \qquad 
\overline{p_2(z)}=p_1(\omega^2 z). $$
Since $p_0(z)=c_{00}+c_{01}z^2+c_{02}z$,
the first equation is satisfied provided
$$ \overline{c_{00}}+\overline{c_{01}}z+\overline{c_{02}}z^2
= c_{00} +c_{02}z +c_{01}z^2 \Iff c_{00}\in\RR, 
\quad c_{02}=\overline{c_{01}}. $$
The second and third are equivalent, since substituting
$\omega z$ for $z$ in the third gives
$$ \overline{p_2(\omega z)} = p_1(\omega^2(\omega z))=p_1(z). $$
Hence (a) is equivalent to $Tc$ being Hermitian with $c_{00}=1$,
and implies $c_{02}=\overline{c_{01}}$.

By Corollary \ref{mequalszero},
(a),(b),(c), (d) hold for a Heisenberg frame with $v_0\ne0$.
For the converse, suppose that (a),(b),(c), (d) hold.
Then by (a), $Tc$ is Hermitian with $c_{00}=1$, and $c_{02}=\overline{c_{01}}$,
so that (d) gives $|c_{jk}|={1\over2}$, $(j,k)\ne(0,0)$.
Hence Lemma \ref{evaluelemma}, gives
$$ P_{Tc}(\lambda)=\lambda^3-\lambda^2+0\lambda+a_0, \qquad
a_0=-\det(Tc). $$
In view of Theorem \ref{cvcharthm}, with condition (iv)$'$, we need only
show that $\det(Tc)=0$.

Since $d(Tc)_{jk}=p_{j-k}(\omega^k)$, condition (a) gives
$$ 3(Tc) 
= \bmat{p_0(1)& p_2(\omega) & p_1(\omega^2)\cr
p_1(1)& p_0(\omega) & p_2(\omega^2)\cr
p_2(1)& p_1(\omega)& p_0(\omega^2)}
= \bmat{p_0(1)& \overline{p_1(1)} & p_1(\omega^2)\cr
p_1(1)& p_0(\omega) & \overline{p_1(\omega)}\cr
\overline{p_1(\omega^2)}& p_1(\omega)& p_0(\omega^2)}.
$$
Since $\overline{p_0(z)}=p_0(z)$, the invariant condition gives
$\prod_k p_1(\omega^k) 
= \prod_k \overline{p_1(\omega^k)} 
= \prod_k \overline{p_0(\omega^k)} $, and we calculate
and so
\begin{align*}
3\det(Tc) 
&= p_0(1) \{p_0(\omega) p_0(\omega^2)-|p_1(\omega)|^2\}
-\overline{p_1(1)}\{p_1(1)p_0(\omega^2)-\overline{p_1(\omega^2)}
\overline{p_1(\omega)}\} \cr
& \qquad +p_1(\omega^2)\{p_1(1)p_1(\omega)-\overline{p_1(\omega^2)}p_0(\omega)\}
\cr 
&= 3 \prod_k p_0(\omega^k)
- p_0(1) |p_1(\omega)|^2
- p_0(\omega^2) |p_1(1)|^2
- p_0(\omega) |p_1(\omega^2)|^2.
\end{align*}
Applying the Riesz factorisation to the last three terms we then obtain
$$ 3\det(Tc) = 3 \prod_k p_0(\omega^k)
- p_0(1) p_0(\omega)p_0(\omega^2)
- p_0(\omega^2) p_0(1)p_0(\omega)
- p_0(\omega) p_0(\omega^2)p_0(1)
= 0. $$
(need to check $v_0\ne0$!)
\end{proof}

We now use Proposition \ref{dthreeprop} to find the solutions for $d=3$.
First we consider the Riesz--type factorisation
$|p_1(z)|^2=p_0(z)p_0(\omega z)$.
Note that $1+\omega+\omega^2=0$, and the variable $z$ of our symbols
satisfies $z^3=1$, $\overline{z}=z^2$. Hence multiplying out gives
\begin{align*}
|p_1(z)|^2 & = p_1(z)\overline{p_1(z)}
= (c_{20}+c_{21}\omega^2z^2+c_{22}\omega z)
  (\overline{c_{20}}+\overline{c_{21}}\omega z+\overline{c_{22}}\omega^2 z^2)\cr
& = 
(\hbox{$\sum_k$} |c_{2k}|^2)
+ (c_{20}\overline{c_{21}}+c_{21}\overline{c_{22}}+c_{22}\overline{c_{20}})
\omega z
+ (c_{20}\overline{c_{22}}+c_{21}\overline{c_{20}}+c_{22}\overline{c_{21}})
\omega^2 z^2,
\end{align*}
and
\begin{align*}
p_0(z)p_0(\omega z) 
& = (1+\overline{c_{01}}z+c_{01}z^2)
(1+\overline{c_{01}}\omega z+c_{01}\omega^2z^2) \cr
&= 1-|c_{01}|^2 + \omega^2(c_{01}^2-\overline{c_{01}})z
+\omega(\overline{c_{01}}^2-c_{01})z^2.
\end{align*}
Hence, equating the coefficients of $1,z,z^2$, gives
\begin{align*}
|c_{20}|^2+|c_{21}|^2+|c_{22}|^2
& = 1-|c_{01}|^2, \cr
c_{20}\overline{c_{21}}+c_{21}\overline{c_{22}}+c_{22}\overline{c_{20}}
& = (c_{01}^2-\overline{c_{01}}) \omega, \cr
c_{20}\overline{c_{22}}+c_{21}\overline{c_{20}}+c_{22}\overline{c_{21}}
& = (\overline{c_{01}}^2-c_{01})\omega^2.
\end{align*}
Since $|c_{01}|^2=|c_{1k}|^2={1\over 4}$, the first equation is
automatically satisfied. Further, the second and third are conjugates of
each other, and so we have only one equation (for the Riesz--type
factorisations)
$$ c_{20}\overline{c_{22}}+c_{21}\overline{c_{20}}+c_{22}\overline{c_{21}}
= (\overline{c_{01}}^2-c_{01})\omega^2. $$
Setting $z_{jk}:=c_{jk}/|c_{jk}|=2c_{jk}$, this becomes
$$ {1\over 4} ( z_{20}\overline{z_{22}}+z_{21}\overline{z_{20}}
+z_{22}\overline{z_{21}})
= ({1\over4} \overline{z_{01}}^2-{1\over 2} z_{01})\omega^2. $$
Since $\overline{z_{01}}^2=z_{01}^{-2}$, this can be rewritten as
$$ z_{20}\overline{z_{22}}+z_{21}\overline{z_{20}} +z_{22}\overline{z_{21}}
= \bigl( {1\over z_{01}^2} -2 z_{01}\bigr)\omega^2
= {1-2 z_{01}^3 \over z_{01}^2}\omega^2
= {1-2 (z_{01}/\omega)^3 \over (z_{01}/\omega)^2}.  $$
Now we set $z:=z_{01}/\omega$, so that our equation becomes
\begin{equation}\label{HypoEqn}
z_{20}\overline{z_{22}}+z_{21}\overline{z_{20}} +z_{22}\overline{z_{21}}
= {1-2z^3 \over z^2}.
\end{equation}

We proceed to analyze both sides of this equation.

\begin{lemma}
\label{hypoboundary}
The curve
$$\gth \mapsto {1-2z^3\over z^2}, \quad z=-e^{i\gth} $$
is a $3$--cusped hypocycloid (or deltoid).
\end{lemma}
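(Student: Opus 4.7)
The plan is to reduce the given rational parametrization to a standard closed-form parametrization of the deltoid by a direct substitution, and then identify it with a well-known complex expression for a 3-cusped hypocycloid.

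First I would substitute $z=-e^{i\gth}$ into $(1-2z^3)/z^2$. Since the minus sign squares away, $z^2=e^{2i\gth}$, while $z^3=-e^{3i\gth}$. Therefore
$$\frac{1-2z^3}{z^2}=\frac{1+2e^{3i\gth}}{e^{2i\gth}}=e^{-2i\gth}+2e^{i\gth}.$$
So the curve in question is simply $\gth\mapsto e^{-2i\gth}+2e^{i\gth}$, or, separating real and imaginary parts, $x(\gth)=2\cos\gth+\cos 2\gth$, $y(\gth)=2\sin\gth-\sin 2\gth$.

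Next I would recognise this as the standard parametrization of the hypocycloid generated by a circle of radius $r=1$ rolling inside a circle of radius $R=3$: in general such a hypocycloid is $t\mapsto (R-r)e^{it}+re^{-i(R-r)t/r}$, and for $R=3$, $r=1$ this becomes exactly $2e^{it}+e^{-2it}$. The ratio $R/r=3$ produces three cusps, i.e.\ a deltoid.

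For completeness one could verify the cusps intrinsically: $\frac{d}{d\gth}(e^{-2i\gth}+2e^{i\gth})=2i(e^{i\gth}-e^{-2i\gth})$, which vanishes precisely when $e^{3i\gth}=1$, giving $\gth\in\{0,2\pi/3,4\pi/3\}$ and cusp locations $3,3\go,3\go^2$ with $\go=e^{2\pi i/3}$. There is no real obstacle here: once the substitution collapses the rational expression into the two-term exponential sum, the identification with the deltoid is immediate.
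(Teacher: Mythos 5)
Your proposal is correct and follows essentially the same route as the paper: substitute $z=-e^{i\gth}$ to reduce the expression to $2e^{i\gth}+e^{-2i\gth}$, then match it against the standard parametrization of a hypocycloid with radii $3$ and $1$ (the paper does this via the real parametric equations, you via the complex form, with an added cusp check). No gaps.
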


\begin{proof}
Recall that the standard parametric equations
for a hypocycloid with radii
$a$ and $b$ with  $a>b>0$ are (see,  e.g.  \cite{HypoWiki})
\begin{align*}
x(\gth)&=(a-b)\cos(\gth)+b\cos({a-b\over b}\gth),\cr
y(\gth)&=(a-b)\sin(\gth)-b\sin({a-b\over b}\gth),
\end{align*}
and if $n=a/b$ is an integer, it is $n$--cusped.
Now
$$w:={1-2z^3\over z^2}=-2z+z^{-2} = 2e^{i\gth} + e^{-2i\gth}, $$
which has Cartesian coordinates
$$ \Re(w) = 2 \cos(\gth)+\cos(2\gth), \qquad
   \Im(w) = 2 \sin(\gth)-\sin(2\gth), $$
and so $w(\gth)$ is a $3$--cusped hypocycloid with radii $a=3$ and $b=1$.
\end{proof}

\begin{figure}[ht]
\centering
  \includegraphics[width=6cm, height=6cm]{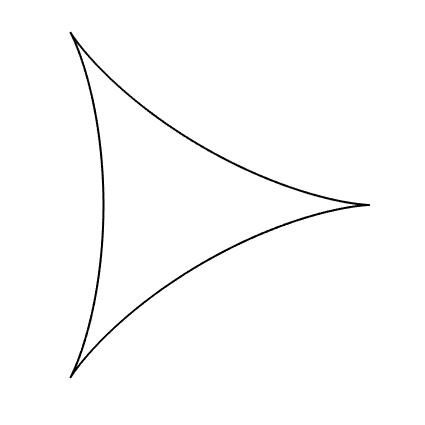} 
  \caption{The Hypocycloid}
  \label{fig1}
\end{figure}

\medskip For the left side, note that the product of the
three terms
$$ z_{20}\overline{z_{22}} \cdot z_{21}\overline{z_{20}}
\cdot z_{22}\overline{z_{21}}=1. $$

\begin{lemma}
\label{hyposet}
The set of complex numbers
$$\{z_1+z_2+z_3\,:\, z_1z_2z_3 =1,\, |z_j|=1\}$$
is the interior and boundary of the $3$--cusped hypocycloid
given by the right side, i.e.,
$$ \gth \mapsto {1-2z^3\over z^2},\quad z=-e^{i\gth}. $$
In particular, points on the boundary have the form
$$ z_1=z_3=e^{-i{\phi\over2}}, \quad z_2=e^{i\phi} \qquad
\hbox{($\gth=-{\phi\over2}$)},$$
$$ z_1=z_2=e^{-i{\phi\over2}}, \quad z_3=e^{i\phi} 
,$$
or
$$ z_2=z_3=e^{-i{\phi\over2}}, \quad z_1=e^{i\phi} 
.$$
\end{lemma}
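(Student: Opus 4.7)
The plan is to parametrize the constrained triples as $z_1=e^{i\alpha}$, $z_2=e^{i\beta}$, $z_3=e^{-i(\alpha+\beta)}$ and study the map $F\colon \T^2\to\CC$ given by $F(\alpha,\beta)=z_1+z_2+z_3$. The image $F(\T^2)$ is precisely the set in the statement, so the task reduces to identifying it with the closed region bounded by the hypocycloid of Lemma~\ref{hypoboundary}. Since $\T^2$ is compact and connected, $F(\T^2)$ is compact, connected, bounded (by $3$), and contains $0$ (take $z_j$ to be the three cube roots of unity).

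The key step is the critical-point analysis. Directly, $F_\alpha=i(z_1-z_3)$ and $F_\beta=i(z_2-z_3)$, so the real Jacobian is singular exactly when the complex ratio $(z_1-z_3)/(z_2-z_3)$ is real. Using $e^{iu}-e^{iv}=2i\sin\tfrac{u-v}{2}\,e^{i(u+v)/2}$ together with $\theta_1+\theta_2+\theta_3\equiv 0$, this ratio rewrites as
$$e^{i(\theta_1-\theta_2)/2}\,\frac{\sin((\theta_1-\theta_3)/2)}{\sin((\theta_2-\theta_3)/2)},$$
and a short case analysis (a vanishing sine, versus $\theta_1\equiv\theta_2\pmod\pi$) shows reality forces two of the $z_j$ to coincide. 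When $z_a=z_b=u$ and $z_c=u^{-2}$, the value $F=2u+u^{-2}$ is exactly the hypocycloid parametrization of Lemma~\ref{hypoboundary}, so every critical value of $F$ lies on the deltoid curve.

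To finish, I would run a standard topological argument. By the implicit function theorem, every regular value of $F$ lies in the interior of $F(\T^2)$, so $\partial F(\T^2)$ is contained in the deltoid curve. A direct check at the preimage $(\alpha,\beta)=(0,2\pi/3)$ of $0$ shows the partials are $\RR$-linearly independent there, so $0$ is an interior point of the image. Since the deltoid is a simple closed curve bounding a single bounded component $D$ that contains $0$, the compact connected set $F(\T^2)$ whose boundary lies in $\partial D$ and which meets the interior of $D$ must equal $\overline{D}$, the closed hypocycloidal region. The three displayed boundary parametrizations are then the three choices of which pair $z_a=z_b$ coincide: writing the common value as $e^{-i\phi/2}$ forces the third to be $(e^{-i\phi/2})^{-2}=e^{i\phi}$. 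The main obstacle is the critical-point computation, in particular cleanly ruling out spurious critical points coming from the vanishing-sine subcases and the edge case $z_1=-z_2$; once that is in hand the topological wrap-up is routine.
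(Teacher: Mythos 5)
Your argument is correct, and it is genuinely different from the paper's. You identify the image of $F(\alpha,\beta)=e^{i\alpha}+e^{i\beta}+e^{-i(\alpha+\beta)}$ by a critical-value analysis: the Jacobian degenerates exactly when $i(z_1-z_3)$ and $i(z_2-z_3)$ are $\RR$-dependent, i.e.\ when $z_1,z_2,z_3$ are collinear, and three points of the unit circle are collinear only if two coincide --- this one-line observation disposes of the ``spurious'' subcases you worry about (vanishing sines are precisely $z_1=z_3$ or $z_2=z_3$, and $z_1=-z_2$ makes the half-angle factor $\pm i$, not real), so the obstacle you flag is not an obstacle. The critical values are then $2u+u^{-2}$, $|u|=1$, which is the deltoid of Lemma \ref{hypoboundary}, and your open-mapping/Jordan-curve wrap-up (boundary of the compact image lies on the curve, the image meets the inside, connectedness of the two complementary components forces equality with the closed region) is sound; it also delivers the ``in particular'' clause, provided you phrase the key step as ``every point with a regular preimage is interior to the image'', so that a boundary value can only be represented by triples with two equal entries. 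The paper argues quite differently and more explicitly: fixing $\phi$ and varying $t$, it shows by a direct convex-combination computation that $e^{it}+e^{i\phi}+e^{-i(t+\phi)}$ traces exactly the chord joining the two deltoid points $A=2e^{-i\phi/2}+e^{i\phi}$ and $B=-2e^{-i\phi/2}+e^{i\phi}$, a chord which is tangent to the deltoid at the third point $2e^{i\phi}+e^{-2i\phi}$ (the classical constant-length tangent-chord picture), and these chords sweep out the region. The paper's route is elementary, exhibits the fibres and the boundary parametrisations concretely, and needs no topology beyond the picture; your route is less explicit but more robust, and arguably makes the set-equality (``interior \emph{and} boundary, and nothing more'') cleaner, since the claim that the tangent chords fill exactly the closed hypocycloidal region is left largely to the geometry in the paper, whereas it is forced by your boundary-containment and connectedness argument. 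One small caveat: you use that the deltoid is a simple closed curve; that is true (and easily checked from the parametrisation) but should be stated if you invoke the Jordan curve theorem.
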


\begin{proof}
Since $z_1z_2z_3=1$, we can write a point $w$ in the set as
$$ w = z_1 + z_2 + z_3, \qquad  z_1:=e^{i t}, \quad z_2:=e^{i\phi}, \quad
z_3:=e^{-i(t+\phi)}. $$
Now fix $\phi$, and let $t$ vary.
Let $A$ and $B$ be the points on
the hypocycloid for $\gth=-{\phi\over2}$ and $\gth=\pi-{\phi\over2}$, i.e.,
$$ A = 2 e^{-i{\phi\over2}} + e^{i\phi}, \qquad
B = -2 e^{-i{\phi\over2}} + e^{i\phi}. $$
We claim (cf. Figure 2) that as $t$ varies $w$ traces out the line segment connecting
$A$ and $B$, precisely
$$ w 
= e^{it} + e^{i\phi} + e^{-i(t+\phi)}
= \gl A +(1-\gl) B, \qquad \gl = {\cos(t+{\phi\over2})+1\over 2}
\in [0,1], 
$$
which we verify by multiplying out
\begin{align*}
\gl A+(1-\gl)B & = \gl(A-B)+B
=  { e^{i(t+{\phi\over2})} + e^{-i(t+{\phi\over2})} +2 \over 4 }
4 e^{-i{\phi\over2}} -2 e^{-i{\phi\over2}} + e^{i\phi} \cr
&=  ( e^{it} + e^{-i(t+\phi)} +2 e^{-i{\phi\over2}} )
-2 e^{-i{\phi\over2}} + e^{i\phi} 
=   e^{it} + e^{i\phi} + e^{-i(t+\phi)}.
\end{align*}
Further we note that this line segment is tangent to
the point where $\theta=\phi$, i.e.,
$$ C = 2 e^{i\phi}+e^{-2i\phi}. $$
Indeed the tangent to the hypocyloid at this point is
$$ {d\over d\gth} \bigl(2e^{i\gth}+e^{-2i\gth}\bigr)\bigl|_{\gth=\phi}
= 2e^{i\phi}-2e^{-2i\phi}
= i\sin(\hbox{${3\over2}$}\phi)(A-B), $$
which is collinear with the line segment, except when
$\phi=0,\pm{2\over3}\pi$, the three cusps of the hypocycloid.

At the cusp corresponding to $\phi=0$, $A=3$ and
$B=-1$, and thus the line segment connecting $A$ and $B$ is
also ``tangent'' at that cusp. The other cusps are handled similarly.
\end{proof}

\begin{figure}[h!]
\centering
  \includegraphics[width=6cm, height=6cm]{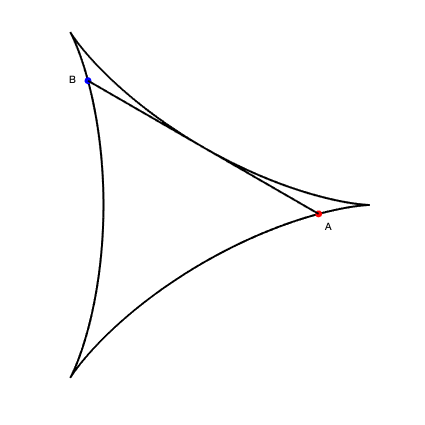} 
  \caption{Points A and B on the Hypocycloid}
  \label{fig2}
\end{figure}

\medskip
Thus, from equation (\ref{HypoEqn}), it follows that 
$z_{20}\overline{z_{22}} ,$ $z_{21}\overline{z_{20}} $ and
$z_{22}\overline{z_{21}}$ are {\it boundary} points of the hypocyloid.  Solutions may be obtained as follows.  Pick one of the boundary solutions, e.g., where $z_1=z_3,$   so that
$$ z_{20}\overline{z_{22}} = e^{-i{\phi\over2}}, \qquad
 z_{21}\overline{z_{20}} = e^{i\phi}, \qquad
 z_{22}\overline{z_{21}} = e^{-i{\phi\over2}}, \qquad
 z = -e^{-i{\phi\over 2}}. $$
 These can be solved using one of them as a free parameter, i.e.,
 $$ z_{22}=e^{i{\phi\over2}}z_{20} \,\,\hbox{and}\,\, z_{21}=e^{i\phi}z_{20},\quad |z_{20}|=1. $$
 In this way we arrive at a continuum of parameterized solutions for the overlaps of a SIC in dimension $d=3.$
 
 \section{Closing Comment}
 It has sometimes been remarked that the overlaps are zeros of a self-reciprocal polynomial ($z^np(1/z)=p(z)$) with integer coefficients. The fact that the coefficients are integers is notable and perhaps important. However being self-reciprocal is {\it not}. Indeed if they are roots of a polynomial $p(z)$ of degree $n,$ then they are also automatically roots of $q(z):=p(z) \times z^np(1/z)$ and this latter polynomial is self-reciprocal.

\section{Acknowledgement}

We would like to thank 
Marcus Appleby
for many helpful discussions related to SICs.


\bibliographystyle{alpha}
\bibliography{references}
\nocite{*}



\vfil
\end{document}